\newcommand{\bigO}{\ensuremath{\operatorname{O}}}%\mathcal{O}}
\newcommand{\tail}[1]{\ensuremath{\text{tail}(#1)}}
\newcommand{\head}[1]{\ensuremath{\text{head}(#1)}}
\newcommand{\arrival}[1]{\ensuremath{\text{arr}(#1)}}
\newcommand{\arrivalingraph}[2]{\ensuremath{\text{arr}_{#2}(#1)}}
\newcommand{\gamenamelong}{\probname{Robust Connection Game}}
\newcommand{\gamename}{\probname{RCG}}
\newcommand{\st}{s.\,t.}
\newcommand{\wlogg}{w.\,l.\,o.\,g.}
\newcommand{\Wlogg}{W.\,l.\,o.\,g.}
\theoremstyle{plain}
\newtheorem{theorem}{Theorem}[section]
\newtheorem{lemma}[theorem]{Lemma}
\newtheorem{observation}[theorem]{Observation}
\theoremstyle{definition}
\newcommand{\probname}[1]{\textsc{#1}}
\newcommand{\prob}[6]{%
  \needspace{3\baselineskip}
  \begin{quote}
    \begin{labeling}{#6}%
      \setlength\itemsep{-.4ex}
    \item[#1]
      % \textit{#2} #3\\
      % \textit{#4} #5
    \item[\emph{#2}]#3
    \item[\emph{#4}]#5
    \end{labeling}%
  \end{quote}%
}
\newcommand{\appsymb}{$\bigstar$}
\newcommand{\appref}[1]{\hyperref[proof:#1]{\appsymb}}
\newif\ifarxiv
\newcommand{\appendixsection}[1]{%
	\ifarxiv{}\else{}
	\gappto{\appendixProofText}{\section{Additional Material for Section~\ref{#1}}\label{app:#1}}
	\fi{}
}
\newcommand{\toappendix}[1]{%
	\ifarxiv{}#1\else{}
	\gappto{\appendixProofText}
	{{
			#1
	}}
	\fi{}
}
\newcommand{\appendixproof}[2]{%
	\ifarxiv{}\begin{proof}#2\end{proof}\else{}\gappto{\appendixProofText}
	{
		\subsection{Proof of \cref{#1}}\label{proof:#1}
		#2
	}
	\fi{}
}
\begin{document}
%%% Mehrere Autoren werden durch \and voneinander getrennt.
%%% Die Fußnote enthält die Adresse sowie eine E-Mail-Adresse.
%%% Das optionale Argument (sofern angegeben) wird für die Kopfzeile verwendet.
\title{Network Navigation with Online Delays is \PSPACE-complete}
%%%\subtitle{Untertitel / Subtitle} % falls benötigt
\author[Depian, Kern, R\"oder, Terziadis and Wallinger]
{Thomas Depian\footnote{[e11807882$\vert$e11904675$\vert$sebastian.roeder]@student.tuwien.ac.at, TU Wien, Austria} \and Christoph Kern$^1$ \and Sebastian Röder$^1$ \and Soeren Terziadis\footnote{[sterziadis$\vert$mwallinger]@ac.tuwien.ac.at, Algorithms \& Complexity Group, TU Wien, Austria} \and Markus Wallinger$^2$}
\startpage{1} % Beginn der Seitenzählung für diesen Beitrag
\editor{Gesellschaft für Informatik}    % Namen der Herausgebe
\booktitle{SKILL 2023} % Name des Tagungsband; optional Kurztitel
\yearofpublication{2023}
%%%\lnidoi{18.18420/provided-by-editor-02} % Falls bekannt
\maketitle

\begin{abstract}
  \looseness=-1
  In public transport networks disruptions may occur and lead to travel delays.
  It is thus interesting to determine whether a traveler can be resilient to delays that occur unexpectedly, ensuring that they can reach their destination in time regardless.
  We model this as a game between the traveler and a delay-introducing adversary.
  We study the computational complexity of the problem of deciding whether the traveler has a winning strategy in this game.
  Our main result is that this problem is \PSPACE-complete.

% When considering different kinds of transportation or traveling, connections are typically only available at concrete time points. This has lead to the study of many different problems on so-called temporal graphs, including the robustness of paths in such graphs to delays or cancellations. Based on this existing body of knowledge, we investigate the question of whether a route in a network is \textit{robust} against up to $x$ delays of $\delta$ time units. We provide a proof for \PSPACE-completeness for this problem.

%Die \LaTeX-Klasse \texttt{lni} setzt die Layout-Vorgaben für Beiträge in LNI Konferenzbänden um.
%Dieses Dokument beschreibt ihre Verwendung und ist ein Beispiel für die entsprechende Darstellung.
%Der Abstract ist ein kurzer Überblick über die Arbeit der zwischen 70 und 150 Wörtern lang sein und das Wichtigste enthalten sollte.
%Die Formatierung erfolgt automatisch innerhalb des abstract-Bereichs.\soeren{TODO}
\end{abstract}

\begin{keywords}
  temporal paths \and network navigation \and robust connections
\end{keywords}

\section{Introduction}
According to Destatis, the total distance traveled by individuals in Germany using public transport in 2022 amounted to 99 billion kilometers~\cite{statistisches_genesis_2022}.
%According to Destatis, persons in Germany traveled 99 billion kilometers with public transport in 2022~\cite{statistisches_genesis_2022}.
Finding the best public transport route between a starting point and a destination is a well-researched topic and there are well-known algorithms and datastructures for computing such routes~\cite{BastDGMPSWW16}.
Once the route has been determined, however, the traveler may encounter additional challenges.
%After the route has been decided upon, however, the traveler may still face problems.
According to rail company Deutsche Bahn, more than 13\,\% of the stops of their long-distance trains were not within 15 minutes of the schedule in April 2023~\cite{DeutscheBahn_2023}.
Travelers thus regularly need to board delayed trains, perhaps causing them to miss connecting trains later on.
Therefore, it is 
%It is thus 
an interesting problem to find out whether a traveler can be resilient to such delays.
The problem formulation we are interested in here models the following question: Starting from point~$s$, is it possible to reach point~$z$ in time $t$ %\thomas{I don't know whether this can be misunderstood. Because we never set a real time limit on the traveler -- they just have to reach $z$ in finitely many rounds. The time limit is only implicitly there by being stuck at a vertex.}\soeren{I think this is fine. It is the natural problem formulation, the time limit might be ``today'', which basically means the entire network (since train providers tend to not run trains in the middle of the night) and I think a time limit $t$ could easily be included by simply changing a game state to False, if it is after the time limit.}\manuel{I am aware of this and deliberately chose to present it with a time limit. For sure with time limit we get an at-most-as-hard problem and likely it is equivalent, but it seems much more natural and easy to describe here.} TD: Fine for me - So this is done.
where a delay of a connecting train may occur unexpectedly at any changeover?

The above question can be modeled as a game between a traveler and a public transport company.
In each round of the game, the traveler arrives at some station in the public transport network.
Then the public transport company announces delays of the connections.
The traveler then decides on which connection to take next and so on, until either the traveler reaches the destination $z$ or the time is up. %\thomas{Follow-up to above comment: In the game the traveler is stuck at a vertex (possibly due to arriving there ``too late'') but the time is not really up.} TD: Also done then
To model realistic scenarios we impose a delay-budget constraint on the public transport company, that is, the announced delays may sum up to at most some fixed budget.  
Whether the traveler is resilient to delays is then equivalent to whether they %\thomas{this was the only time ``she'' was used so I changed it to ``he''.}\sebastian{Guess we should use ''they'' whenever not too confusing}\soeren{Agreed and changed to ``they''} ms: Guess this is settled.
have a winning strategy, that is, whether it is possible to reach the destination in time regardless of which delays are being announced in each round.
We call the resulting decision problem \gamenamelong.

For very important appointments it may be useful to check beforehand whether they can be reached even with potential delays.
That is, we want to decide \gamenamelong\ computationally.
Thus it is interesting to know its computational complexity.
That is, how complex is it to decide, given the schedule of the network, the start and destination as well as the arrival time, whether the traveler has a winning strategy?

We show that \gamenamelong\ can be solved with space bounded polynomially in the input length, that is, \gamenamelong\ is contained in \PSPACE\ (see \cref{sec:PSPACE-containment}).
This in particular implies that it can be solved in time exponential in the input length.
However, we also prove that \gamenamelong\ is \PSPACE-hard, that is, every problem in \PSPACE\ can be reduced to \gamenamelong\ in polynomial time (see \cref{sec:PSPACE-hardness}).
This makes it unlikely that the problem can be solved efficiently in general.
While a negative result, our reduction highlights several features of the networks that we exploit in showing hardness, see the conclusion in \cref{sec:conclusion}.
It may be worthwhile to check to which extent these features occur in real-world networks and, if not, whether their absence can be exploited to obtain efficient algorithms.
\ifarxiv{}\else
Due to space constraints, we defer proofs for results marked by \appsymb\ to a full version of the paper, see Ref.~\cite{NetworkNavigation2023}.
\fi
%\todo{Put online, reference here}. TD: seems done

\paragraph{Related work.}
We model \gamenamelong\ on so-called temporal graphs, that is, graphs in which edges are equipped with time information such as their starting time and traversal time.
Routing on temporal graphs was to our knowledge first explored by \citet{Berman96} and has in recent years gained considerable attention.
Robustness of temporal connectivity was herein mostly studied with respect to deletion of a bounded number of time arcs % ~\cite{EnrightM18,EnrightMMZ21,MolterRZ21}
or vertices% ~\cite{FluschnikMNRZ20,KlobasMMNZ23,ZschocheFMN20}
, see, \eg, the overview by~\citet{FuchsleMNR22STACS}.
In terms of delays in our context, we are aware of two works:

First, \citet{FuchsleMNR22STACS} study the problem of finding one route that is robust to a bounded number of unit delays, regardless of when they occur.
That is, checking whether there \emph{exists} a route that is feasible \emph{for all} delays.
In this model, the authors assume that the traveler never reconsiders the rest of the route.
The traveler thus potentially foregoes better connections that open up after they experienced some delays already.
The authors then study the complexity of finding such routes and whether efficient algorithms can be obtained if the number of delays or the network topology is restricted.

In the second work, \citet{FuchsleMNR22SAND} study finding \emph{for all} possible delays whether there \emph{exists} a delay-tolerant route.
In particular, they study the case where the delays may occur unexpectedly and model the resulting problem as a game similar to what we do here.
However, there is a crucial difference: In \citeauthor{FuchsleMNR22SAND}'s model a delay of a connection may be announced only \emph{during} the time in which the traveler takes the connection.  
In contrast, in our model the delays are announced \emph{before} the traveler decides on the next connection to take.
We would argue that both models are relevant and thus we close a gap in the literature.
It also turns out that this seemingly small difference has an immense effect on the computational complexity: In \citeauthor{FuchsleMNR22SAND}'s game model, deciding whether there is a winning strategy is polynomial-time solvable and only becomes \PSPACE-hard if we require the route taken by the traveler to be a path (that is, no vertex is traversed twice).
In contrast, in our model the problem is \PSPACE-hard without additional requirements on the route.

%\sebastian[inline]{I was unable to fix the cref namings with LNI template, so if someone wants to give it a try...}
%\soeren[inline]{Workaround: \textbackslash label takes an optional argument. Use \textbackslash label[lemma], then it works. Can Someone check all labels for this}

% \thomas[inline]{The documentation for the template (see  https://www.ctan.org/pkg/lni) states that we should use the predefined macros for abbreviations such as \eg (see Section 4.6.7). Does one of you know whether this is required to achieve a consistent appearance in a potential publication, or just some ``best practice''?}
% \soeren[inline]{I have gotten reviews pointing out inconsistent usage of eg, e.g., etc. It should be fine to be consistent, but if theres a macro, we can use it.}
% \thomas[inline]{Ok, so then let's use the macros for ie, eg etc.}

\section{Preliminaries}
% \thomas[inline]{As suggested in Mattermost, I replaced $\vert V \vert$ with $n$ and $\vert E \vert$ with $m$ everywhere.} ms: Thanks!
%TODO: add other definitions if needed
\looseness=-1
In this paper, we work with temporal graphs, which build on top of static graphs.
A \emph{static graph} is a (directed) graph $G_s = (V_s, A_s)$ consisting of a set of \emph{vertices} $V_s$ and \emph{arcs} $A_s \subseteq V_s \times V_s$. %\soeren{maybe just $V$ and $A$ or $V_s$ and $A_s$}
We denote an \emph{arc} $a$ as a tuple $a = (u, v)$, and call $\tail{a} = u$ the \emph{tail} vertex, and $\head{a} = v$ the \emph{head} vertex of $a$.
The graph $G_s$ contains \emph{multi-arcs}, if there are two distinct arcs $a = (u, v), a' = (u, v) \in A_s$ that have the same vertices $u$ and $v$ as their head and tail.
We allow multi-arcs but we prohibit self-loops, \ie, arcs of the form $(u, u)$ for $u \in G_s$.
A \emph{walk} $W = (v_1, \dots, v_k)$ in $G_s$ is a sequence of $k$, not necessarily pairwise distinct, vertices such that we have $(v_i, v_{i + 1}) \in A_s$ for $1 \leq i < k$.
A walk is a \emph{path} if the vertices are pairwise distinct.

\paragraph{Temporal Graphs.}
A \emph{temporal graph} $G = (V, E)$ is a static directed graph where we replace arcs with temporal arcs. %\soeren{temporal arc or temporal edge?}
We denote a \emph{temporal arc} $e$ as a tuple $e = (u, v, t, \lambda)$, %\soeren{I dont think we need the $e$ subscript AND the access through tail$(e)$, etc.},
where $\tail{e} = u$ and $\head{e} = v$ denote the tail and head vertex, respectively, $t(e) = t$ denotes the \emph{time label}, and $\lambda(e) = \lambda$ denotes the \emph{traversal time} of $e$.
The time label is a point in time, after which $e$ is unavailable.
The traversal time is the time span it takes to travel along the temporal arc $e$.
If a temporal arc $e$ is \emph{delayed} by $\delta$, its time label increases by $\delta$.
A temporal graph $G$ is \emph{$(D, \delta)$-delayed}, for $D \subseteq E$, denoted as $G_{(D, \delta)}$, if the temporal arcs in $D$ are delayed by $\delta$.
Hence, $G_{(D, \delta)}$ is the $(D, \delta)$-delayed temporal graph version of the temporal graph $G$, where we replace each temporal arc $e \in D$ with $e' = (\tail{e}, \head{e}, t(e) + \delta, \lambda(e))$.
In the literature this type of delay is usually denoted as \emph{starting delay}~\cite{FuchsleMNR22STACS}.
For an arc $e$ in a temporal graph $G$ we denote by $\arrivalingraph{e}{G} = t(e) + \lambda(e)$ the \emph{arrival time} of $e$ (at vertex \head{e}) in the temporal graph $G$, \ie, $t(e)$ and $\lambda(e)$ are w.r.t.~the temporal graph $G$.
%\thomas{This is still a bit inconsistent. Because for the DP it is crucial to know w.r.t.~which graph we ``access''/``compute'' the arrival time. Therefore, I think we need to define the arrival time w.r.t.~a graph, whereas this is not needed for the other functions (since this information will stay the same, except for $t(e)$.}\soeren{This comment has been addressed now right?}
We omit $G$ and just write \arrival{e} if $G$ is clear from context.
% Furthermore, we denote, for a set $D \subseteq E(G)$, by $\delayedarrival{e}{D}{\delta}$ the \emph{delayed arrival time} of $e$ (at vertex \head{e}), which is defined as
% \begin{align*}
%     \delayedarrival{e}{D}{\delta} :=
%     \begin{cases}
%         \arrival{e} + \delta &\quad\text{if } e \in D,\\
%         \arrival{e} &\quad\text{otherwise}.
%     \end{cases}
% \end{align*}
% \soeren{if we have this notation, do we also need to distinguish between a temporal walk and a delayed temporal walk using $\delayedarrival{e}{D}{\delta}$?}    
A \emph{temporal walk} $W = (v_1, \dots, v_k)$ in $G$ is a walk where we require in addition that for each temporal arc $e_i = (v_i, v_{i+1}, \cdot, \cdot)$, $1 \leq i < k$, in $W$ we have $\arrival{e_i} \leq t(e_{i + 1})$. %\thomas{$v_i$ vs. $e_i$ in the definition -- is it clear what is meant (the arc between the vertices)?}\soeren{clarified}\thomas{Added placeholder for time label and traversal time to tuple to be consistent with definition} TD: ok
Similar to the static version, we define a \emph{temporal path} to be a temporal walk with pairwise distinct vertices.

Throughout this paper, we assume $t(e) > 0$ and $\lambda(e) > 0$ for all temporal arcs $e$ and write $t$ and $\lambda$ if $e$ is clear from context.
Furthermore, we drop the prefix ``temporal'' if there is no risk of confusion. % \thomas{I think we should add some sentence like this (or be consistent throughout the paper), because in the prelims we introduce first the static terms and build on top of them, but later in the paper we only work with the temporal versions (temporal arc, temporal graph, temporal walk).} ms: Gut, passt!
Unless otherwise stated, we denote with $n$ the number of vertices and with $m$ the number of arcs of the graph $G$, \ie, $n = \vert V \vert$ and $m = \vert E \vert$.

\subsection{The \gamenamelong}
\label{sec:game-definition}
% TODO: maybe think about different name
% The name of our problem should be similar, yet different enough, from the "inspiring" problem
% Might also be confusing that our definition of delays is essentially the definition of starting delays in other papers
% Inspired by the problem \probname{Delayed-Robust Connection} and \probname{Delayed-Routing Game} defined by \textcite{DBLP:journals/corr/abs-2201-05011-game}, we define an analogous problem with a corresponding game.
% For a given graph $G$ and two numbers $x, \delta\in\mathbb{N}$, we say that two vertices $s, z \in V$ have a \emph{$(x, \delta)$-robust connection} if there exists in all $(D, \delta)$-delayed graphs $G_{(D, \delta)}$, with $D \subseteq E$ and $\vert D \vert \leq x$, a temporal walk $W$ between $s$ and $z$.
\looseness=-1
We now define the \emph{robust connection game}.
This is a round-based game between a \emph{traveler} and an \emph{adversary}.
Our goal is to model an online planning scenario, where delays are decided on by the adversary only on arrival of the traveler at a vertex and the traveler may reconsider his next steps.
An instance of robust connection game is given by a tuple $(G, s, z, x, \delta)$ where $G = (V, E)$ is a graph, $s, z \in V$, and $x$ and $\delta$ are positive integers.
The meaning is as follows:

\looseness=-1
The traveler starts at the vertex $s$ at timestamp $1$ and wants to reach the vertex $z$ in finitely many rounds.
Initially, we have $D = \emptyset$ and the adversary has a budget of $x$.
In each round, the traveler is located at a vertex $u \in G_{(D, \delta)}$ at some timestamp $t$ and the adversary first announces the delay of a subset $D'$ of the arcs.
The delayed arcs are restricted to those arcs that have $u$ as its tail and a time label $t'$ with $t \leq t'$ (arcs with $t > t'$ could be allowed, however the adversary does not gain an advantage from delaying such arcs).
Furthermore, the number of arcs that can be announced as delayed is limited to the remaining budget $x - \vert D\vert$ of the adversary.
The budget of the adversary is decreased after each round by the number of arcs that are announced as delayed, \ie, by $\vert D' \vert$.
Once the delays have been announced, a delay of $\delta$ time steps is applied to the (newly) delayed arcs, that is, we compute $G_{(D \cup D', \delta)}$.
Afterwards, the traveler has to move to a next vertex $v \neq u$, either through a delayed or not delayed arc $e$, %increasing the current time $t$ by $\lambda(e)$.
setting the current time $t$ to $\arrivalingraph{e}{G_{\left(D \cup D', \delta\right)}}$.
%\thomas{Changed this from ``increasing the current time $t$ by $\lambda(e)$'' to ''setting the current time $t$ to $\arrivalingraph{e}{G_{\left(D \cup D', \delta\right)}}$ because if $t < t(e)$, then $t$ is not simply increased by $\lambda(e)$.} TD: ok
Once the traveler is at $v$, the next round begins, \ie, the adversary can again announce the delay of a set $D'$ of arcs.

\looseness=-1
Each arc can be delayed at most once, \ie, once an arc has been announced as delayed and the delay was applied, it can never be re-delayed again.
The game ends if the traveler either reaches $z$, or if the traveler is stuck at a vertex $u \neq z$, that is, they reach $u$ at a time $t$ where there is no further arc $e$, \st, $t\leq t(e)$.
In the former case, the traveler wins the game, in the later case the adversary wins.
The traveler has a \emph{winning strategy}, if the vertex $z$ can always be reached independent of the announced delays.
%, \ie if the vertices $s$ and $z$ have a $(x, \delta)$-robust connection. 
%\soeren{As discussed, the existence of a $(x, \delta)$-robust connection does not necessarily imply the existence of a winning strategy (I think so at least).}
We define the corresponding decision problem as follows.
\prob{
\gamenamelong~(\gamename)
}{
Input:
}{
A temporal graph $G = (V, E)$, two vertices $s, z \in V$, and two positive integers $x, \delta$, with $x \leq |E|$.
}{
Question:
}{
Does the traveler have a winning strategy in the robust connection game
$(G, s, z, x, \delta)$?
}{}

\section{Solving \gamenamelong{} in Exponential-Time and Polynomial-Space}
\label[section]{sec:PSPACE-containment}
\appendixsection{sec:PSPACE-containment}

% Section should cover
% - Correctness argument
% - runtime
% - P-Space Containment
%
% TODO:
% - Maybe but definitions in preliminaries
% - split up in different sections

Let $I = (G, s, z, x, \delta)$ be an instance of \gamenamelong~(\gamename).
We provide in this section an exponential-time dynamic programming (DP) algorithm to check whether the traveler has a winning strategy in $I$.
Füchsle et al.~\cite{FuchsleMNR22SAND} presented a polynomial-time DP-algorithm for the related \probname{Delayed-Routing Game}.
While our result follows a similar structure, the \gamenamelong{} does not possess polynomially many game states, which will also reflect in the running time.
% As already pointed out, our problem is related to the \probname{Delayed-Routing Game}, for which Füchsle et al.~propose a polynomial-time DP-algorithm~\cite{DBLP:journals/corr/abs-2201-05011-game}.
% The algorithm we propose in this section builds on top their results.
% However, as we shall see, in contrast to their observation, our game does not possess polynomially many states, which will also reflect in the running time.
% \soeren[inline]{As discussed, we should lean less on the previous paper. Suggestion: ``Füchsle et al.~\cite{DBLP:journals/corr/abs-2201-05011-game} presented a polynomial-time DP-algorithm for the related \probname{Delayed-Routing Game}.
% While our result follows a similar structure, the \gamename{} does not possess polynomially many states, which will also reflect in the running time.''}

For an instance $I$ of \gamenamelong\ as above we describe a state of the game with the tuple $\left(v, t, D\right)$, where $v \in V$ is the vertex the traveler is currently at, $t \in \mathbb{N}$ is the current timestamp, and $D \subseteq E$ denotes the set of delayed arcs.
The remaining budget of the adversary equals to $y := x - \vert D \vert$.
Observe that we do not have to consider all possible points in time but can restrict our attention to the set $T := \{1, \arrival{e}, \arrival{e} + \delta \mid e \in E)\}$ of all (delayed) arrival times at vertices.
In the following, we make use of the concept of (delayed) arcs available at a vertex $u \in V$ at the timestamp $t\in T$.
For the set of arcs $E$ in a $(D,\delta)$-delayed graph $G_{\left(D, \delta\right)}$, they are defined as~$E^t_{G_{\left(D, \delta\right)}}(u) := \{e \in E \mid \tail{e}= u, t \leq t(e)\}$.
Since $\delta$ is a fixed value, we simply write $E^t_D$. 
%\soeren{should this be $E_t(u)$?}
%whereas for the set of delayed arcs $D$, we define them as~$D_t(u) := \{e \in D \mid \tail{e}= u, t \leq t(e) + \delta\}$.
% \thomas{Does it make sense to define $t(e)$ w.r.t.~a graph $G$? Because then we could ``merge'' the definitions of $E_t(u)$ and $D_t(u)$, since $D_t(u)$ is basically just $E_t(u)$ on the $(D,\delta)$-delayed graph $G$. (Furthermore, this would maybe also partially resolve the inconsistencies with the arrival time)}\soeren{The definition has changed now and these two should be merged. but we need to be careful that this does not have some odd consequences in the DP. Maybe add a footnote to the set to remind the reader that this includes already delayed edges if they fit the other criteria.}  TD: Seems good!

To solve instance $I$, for each of the possible game states $\left(v, t, D\right)$, we denote in a table $F$ whether the traveler has a winning strategy (\texttt{true}), or not (\texttt{false}).
\cref{lem:dp-recurrence relation} describes how the states of our game depend on each other.
%\thomas{Is there a bug in the template (Theorem vs. Lemma)?} ms: Yes, please fix the cref names by adding e.g. [lemma] to label commands.
\begin{lemma}
\label[lemma]{lem:dp-recurrence relation}
Assuming that the empty disjunction evaluates to \texttt{false}, we have the following equivalences for all $v \in V\setminus \{z\}$, $t \in T$, and $X \in \{D \mid D \subseteq E, \vert D \vert \leq x\}$.
\begin{align}
    F\left(z, t, X\right) &= \texttt{true}\label{eq:dp-recurrence-z}\\
    F\left(v, t, D\right) &= \bigwedge_{\substack{D' \subseteq E^t_{D}(v) \setminus D \\\text{\st}\ \vert D'\vert + \vert D \vert\leq x}}
    \quad\bigvee_{e \in E^t_{D\cup D'}(v)} F\left(\head{e}, \arrivalingraph{e}{G_{\left(D \cup D', \delta\right)}}, D \cup D'\right)\label{eq:dp-recurrence-not-z}
\end{align}
%with $A_t(v) := E_t(v) \cup D_t(v) \cup D_t'(v)$.\soeren{replace $A_t(v)$ with $E_t(v) \cup D'_t(v)$ in the table entry. Remove $A_t(v)$ entirely}
\end{lemma}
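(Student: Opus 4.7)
The plan is to verify that both equations of the recurrence faithfully encode the game semantics from \cref{sec:game-definition}, by a straightforward induction on the game's state space, which is well-founded because time strictly increases along every move (since $t(e)>0$ and $\lambda(e)>0$) and $T$ is finite. Equation~\eqref{eq:dp-recurrence-z} is immediate from the winning condition: whenever the traveler is located at $z$, the game ends and the traveler wins, regardless of the current timestamp $t$ or the set of previously delayed arcs $X$, so $F(z,t,X) = \texttt{true}$.

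For \eqref{eq:dp-recurrence-not-z}, fix a state $(v, t, D)$ with $v \neq z$ and argue both implications. For ``$\Leftarrow$'', suppose the right-hand side is \texttt{true}. Then for every admissible adversary move $D' \subseteq E^t_D(v)\setminus D$ with $|D'|+|D|\leq x$, there exists some $e\in E^t_{D\cup D'}(v)$ such that the successor state $(\head{e},\arrivalingraph{e}{G_{(D\cup D',\delta)}},D\cup D')$ is winning. Composing the traveler's response ``pick such a witness $e$'' with the inductively winning strategies in the successor states yields a winning strategy starting from $(v,t,D)$. For ``$\Rightarrow$'', suppose the traveler has a winning strategy from $(v,t,D)$. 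For every admissible adversary choice $D'$ the strategy prescribes some arc $e\in E^t_{D\cup D'}(v)$ such that the remaining play is still winning from the resulting state, which by the induction hypothesis witnesses the inner disjunction; since this holds for every $D'$, the outer conjunction evaluates to \texttt{true}.

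The main thing to be careful about is matching the side conditions of the recurrence exactly with the game rules: $D'$ ranges over arcs outgoing from $v$ that have not yet been delayed (so $D'\subseteq E^t_D(v)\setminus D$), the joint delay does not exceed the adversary's budget (so $|D'|+|D|\leq x$), and the traveler's chosen arc must be usable after the announcement, hence $e\in E^t_{D\cup D'}(v)$ with the new time computed in the delayed graph $G_{(D\cup D',\delta)}$. The ``stuck'' case is captured by the empty-disjunction convention: if $E^t_{D\cup D'}(v)=\emptyset$ for some admissible $D'$, the inner disjunction is \texttt{false}, the outer conjunction forces $F(v,t,D)=\texttt{false}$, and indeed the adversary wins by playing that $D'$. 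This case, together with the choice to restrict attention to arcs with $t \leq t(e)$ (so that the adversary gains nothing by delaying arcs already in the past, as remarked in \cref{sec:game-definition}) and to the finite set $T$ of possible timestamps, is the only place where the translation from the informal game description to the formal recurrence requires some care.
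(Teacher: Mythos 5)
Your proposal is correct and follows essentially the same route as the paper's proof: both directions are argued directly from the game semantics, matching the adversary's admissible announcements with the conjunction and the traveler's response with the disjunction, including the stuck case via the empty-disjunction convention. Your explicit remark that the argument is grounded by induction on the finitely many timestamps (time strictly increases along moves) is a small added precision the paper leaves implicit, but it does not change the nature of the argument.
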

% \thomas[inline]{Technically, the only restriction to the set of delayable arcs was that they start from the current position of the traveler. However, in the recurrence equation we also require that they are ``in the future'' ($E_t(v)$ only contains arcs with a time label of $t$ or later) --> I adapted the definition} ms: Yes, thanks!
\begin{proof}
We first observe in \cref{eq:dp-recurrence-z} that the traveler is at the destination vertex $z$, \ie, the game is over and the traveler wins the game.
Hence, \cref{eq:dp-recurrence-z} is trivially correct.
We proceed with showing the correctness of \cref{eq:dp-recurrence-not-z}.
To do that, we show both directions explicitly and build our arguments on the definition of the game given in \cref{sec:game-definition}.

Assume that the traveler is at timestamp $t$ at vertex $v \neq z$ and the arcs $D \subseteq E$ have already been delayed by the adversary.
Therefore, we are in the game state $\left(v, t, D\right)$.
Furthermore, assume that the traveler has a winning strategy in this state, \ie, $F\left(v, t, D\right)$ is $\texttt{true}$.
This means that no matter which additional arcs the adversary decides to delay, the traveler can reach $z$.
So assume that the adversary announces the delay of the arcs in $D'$, which by our definition must have its tail at $v$.
%As it is never beneficial to delay an arc with a time label in the past, we can restrict our attention w.l.o.g. to $D'$ being a subset of $E_t(v) \setminus D$.
A winning strategy at $\left(v, t, D\right)$ consists, by the definition of \gamenamelong, of using one outgoing arc $e$ of $v$, available at $t$, after being aware of the additionally delayed arcs.
However, observe that if the winning strategy at $\left(v, t, D\right)$ consists of moving in the presence of the additional delays $D'$ to the vertex $u = \head{e}$, for some arc $e$ available at $v$ at $t$, then the traveler must have a winning strategy in the state $\left(\head{e}, \arrivalingraph{e}{G_{\left(D \cup D', \delta\right)}}, D \cup D'\right)$.
That is, $F\left(u, \arrivalingraph{e}{G_{\left(D \cup D', \delta\right)}}, D \cup D'\right)$ must have been \texttt{true} as well.
Since the set of announced delayed arcs $D'$ was chosen arbitrarily w.r.t.~the constraints enforced by the game, the right side of \cref{eq:dp-recurrence-not-z} therefore correctly evaluates to true.
\ifarxiv\else
  The other direction is deferred to a full version.
\fi
\toappendix{
\ifarxiv\else\subsection{Rest of the Proof of \cref{lem:dp-recurrence relation}}\label{proof:lem:dp-recurrence relation}\fi
Now assume that the right hand side of \cref{eq:dp-recurrence-not-z} is true and we are again in the game state $\left(v, t, D\right)$. 
Then for every possible subset $D'$, the expression $\bigvee_{e \in E^t_{D\cup D'}(v)} F(\head{e}, \arrivalingraph{e}{G_{\left(D \cup D', \delta\right)}}, D \cup D')$ is true.
In particular this means, that there exists an arc $e$, such that there is a winning strategy in the game state $(\head{e}, \arrivalingraph{e}{G_{\left(D \cup D', \delta\right)}}, D \cup D')$.
Thus for every possible subset $D'$ there exists an arc $e$ to traverse for the traveler, \st~they have a winning strategy in the resulting game state.
In other words, they have a winning strategy in $(v, t, D)$ and $F(v,t,D)$ is true.
}                               % end toappendix
\end{proof}
% \soeren{Manuel and I changed (and shortened) quite a bit of text here to refine the argument.}  TD: Thanks! Reads good! I just fixed some typos.

The game starts in the state $\left(s, 1, \emptyset\right)$ and, as a consequence of \cref{lem:dp-recurrence relation}, the traveler has a winning strategy iff $F\left(s, 1, \emptyset\right) = \texttt{true}$.
In the following, we derive the time required to compute $F\left(s, 1, \emptyset\right)$ as well as the space consumption of our approach.
\begin{lemma}%[\appsymb]
  \label[lemma]{lem:dp-algorithm-running-time}
  \looseness=-1
  Our approach solves \gamename{} in $\bigO(n \cdot m^2 \cdot \left(m + 1\right)^{2x})$ time.
\end{lemma}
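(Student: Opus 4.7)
The plan is to bound the size of the DP table $F$ and the work required to evaluate a single entry of the recurrence, then multiply. For the state count $(v, t, D)$: there are $n$ choices for $v$; since $T \subseteq \{1\} \cup \{\arrival{e}, \arrival{e} + \delta \mid e \in E\}$ one has $|T| \leq 2m + 1$; and the number of subsets $D \subseteq E$ with $|D| \leq x$ is bounded by $\sum_{i=0}^{x} \binom{m}{i} \leq (m+1)^x$ (each such subset can be encoded as an ordered tuple of $x$ symbols from $E \cup \{\bot\}$, padded with a dummy symbol). This yields $\bigO(n \cdot m \cdot (m+1)^x)$ entries in $F$.

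For the work at a single non-trivial entry, I would appeal to \cref{eq:dp-recurrence-not-z}. The outer conjunction ranges over $D' \subseteq E^t_D(v) \setminus D$ of cardinality at most $x - |D|$, contributing at most $(m+1)^x$ terms by the same counting bound. The inner disjunction has at most $|E^t_{D \cup D'}(v)| \leq m$ terms. After a one-off polynomial-time preprocessing that groups arcs by their tail and tabulates all (delayed) arrival times and the corresponding table addresses, each individual term in the recurrence can be evaluated in constant time by a table lookup. Hence each entry of $F$ is computed in $\bigO(m \cdot (m+1)^x)$ time.

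Multiplying the two estimates gives the claimed $\bigO(n \cdot m^2 \cdot (m+1)^{2x})$ running time. To ensure the recurrence is well-founded, I would fill $F$ in order of non-increasing $|D|$, breaking ties by non-increasing $t$, so that every recursive reference is to an already-computed entry, and then read off the answer from $F(s, 1, \emptyset)$. The step that requires the most care is the subset counting: a naive $2^m$ bound would lose the dependence on the parameter $x$ entirely, and the $(m+1)^x$ estimate must be applied independently to $D$ and to $D'$ in order to obtain the squared factor in the final expression; the remaining estimates are routine bookkeeping.
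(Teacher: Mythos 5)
Your proposal is correct and follows essentially the same argument as the paper: the same $\bigO(n \cdot m \cdot (m+1)^x)$ state count via the dummy-element bound $\sum_{i=0}^{x}\binom{m}{i} \leq (m+1)^x$, and the same $\bigO(m \cdot (m+1)^x)$ per-state evaluation of \cref{eq:dp-recurrence-not-z}, multiplied together. Your remark on the evaluation order (non-increasing $|D|$, then non-increasing $t$) is a harmless extra detail the paper leaves implicit.
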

\appendixproof{lem:dp-algorithm-running-time}{
% \begin{proof}
  \looseness=-1
    We first derive the number of game states that we have to evaluate.
    Clearly, $\vert T \vert = \bigO(m)$ holds.
    For each $0 \leq y \leq x$, the number of possible sets $D \subseteq E$ of size $y$, that describe the delayed arcs, is $\binom{\vert E \vert}{y}$.
    In total, this is equal to
    \begin{align*}
        \sum_{y = 0}^x \binom{m}{y} \leq \left(m + 1\right)^x,
    \end{align*}
    where the upper bound follows from the fact that we can add a dummy element that will be selected the remaining $x - y$ times.
    Therefore, there are $\bigO(n \cdot m \cdot \left(m + 1\right)^x)$
    %\soeren{We should be able to simplify $O(m+1)^x$ to just $O(m^x)$ here. Moreover, assuming the network is connected, we have $n\in O(m)$, so in total $O(m\cdot m\cdot m^x) = O(m^{x+2})$}
    possible game states.

    \looseness=-1
    In each such game state $\left(v, t, D\right)$, we evaluate up to $\left(m + 1\right)^x$ possible sets $D'$ that the adversary could choose.
    For each such set $D'$ we look up $\vert E^t_{D\cup D'}(v) \vert = \bigO(m)$ other table entries.
    Note that, for evaluating \arrivalingraph{e}{G_{\left(D \cup D', \delta\right)}} we do not need to compute the $\left(D \cup D', \delta\right)$-delayed graph, but can just check whether $e \in D \cup D'$ holds or not.
    Therefore, we can determine whether the traveler has a winning strategy in the state $\left(v, t, D\right)$ in $\bigO(m\cdot \left(m + 1\right)^x)$ time.
    Combining the above yields the claimed running time of $\bigO(n \cdot m^2 \cdot \left(m + 1\right)^{2x})$.
    % \end{proof}
  } % end appendixproof
\begin{lemma}
    \label[lemma]{lem:dp-algorithm-space}
    Our approach for \gamename{} can be implemented to use $\bigO(x\cdot m)$ space.
\end{lemma}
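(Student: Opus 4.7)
The plan is to implement the recurrence from \cref{lem:dp-recurrence relation} as a depth-first recursive evaluation of $F(s, 1, \emptyset)$, without ever materializing the full DP table. At each invocation of $F(v, t, D)$ we iterate over the adversary's choices $D'$ and, within each, over the traveler's choices $e$, performing a single recursive call before moving on to the next branch. We keep the current $D$ as one shared data structure across the whole call stack: entering a recursive call adds $D'$ to it, returning removes the same arcs again. The remaining budget is just a single integer that is threaded through.

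To bound the recursion depth, observe that each recursive call replaces the current timestamp $t$ by $\arrival{e} = t(e) + \lambda(e)$. Since $t(e) \geq t$ and $\lambda(e) > 0$, this gives $\arrival{e} > t$, so timestamps strictly increase along every chain of recursive calls. All relevant timestamps lie in $T$ with $\vert T \vert = \bigO(m)$, so the recursion depth is $\bigO(m)$.

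For the per-level space, each stack frame only needs the current vertex $v$ and timestamp $t$, a pointer to the current arc $e$, and the bookkeeping required to enumerate the valid subsets $D'$. Since $\vert D'\vert \leq x - \vert D \vert \leq x$, representing $D'$ as the list of its at most $x$ arc indices fits in $\bigO(x)$ per frame, and the shared global $D$ costs a further $\bigO(x)$ in total. Multiplying the $\bigO(m)$ recursion depth by the $\bigO(x)$ per-frame cost yields the claimed $\bigO(x \cdot m)$ bound.

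The main obstacle is performing the enumeration of $D'$ within the $\bigO(x)$ per-frame budget, since a naive bitmask over the candidate arcs in $E^t_D(v) \setminus D$ would cost $\bigO(m)$ per frame and blow the bound up to $\bigO(m^2)$. I would resolve this with a size-indexed combinatorial enumeration: for each $k = 0, 1, \dots, x - \vert D \vert$, enumerate the $k$-subsets of an ordered listing of $E^t_D(v) \setminus D$ by maintaining only the $k \leq x$ indices of the currently chosen elements, which fits in $\bigO(x)$ space per frame as required.
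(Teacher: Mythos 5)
Your proposal is correct and follows essentially the same route as the paper: a depth-first evaluation of $F(s, 1, \emptyset)$ storing only one root-to-leaf branch, with depth $\bigO(m)$ because timestamps strictly increase and all lie in the set $T$ of $\bigO(m)$ relevant times, and $\bigO(x)$ space per level for the delayed-arc sets. Your explicit size-indexed enumeration of $D'$ merely spells out what the paper handles by fixing the order in which children of each search-tree node are enumerated.
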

\begin{proof}
    As we show in the proof of \cref{lem:dp-algorithm-running-time}, there are $\bigO(n \cdot m \cdot \left(m + 1\right)^x)$ possible game states.
    Na\"ively enumerating all possible game states would thus require an exponential amount of space.
    To circumvent this, we first observe that, by \cref{eq:dp-recurrence-not-z}, evaluating whether the traveler has a winning strategy in the initial game state $\left(s, 1, \emptyset\right)$ results in a search tree $\mathcal{T}$, in which we enumerate in every odd level of $\mathcal{T}$ all possible subsets $D'$ of delayed arcs, and in every even level of $\mathcal{T}$ the possible arcs the traveler can use.
    If we now use depth first search (DFS) on $\mathcal{T}$ to compute $F\left(s, 1, \emptyset\right)$, we only have to store the states of a single path from the root of $\mathcal{T}$ to a leave of $\mathcal{T}$ at a time.
    If we fix the order in which we enumerate at each internal node of $\mathcal{T}$ its children,
    %\soeren{Do we want to mention what this means? I.e., Fix an order over all possible subsets of outgoing edges for the odd levels as well as fixing an order over all outgoing edges itself}
    %\thomas{I think it is fine for the purpose of the argument. In the end it is just about making sure that we use DFS in a way that let's us ignore the exponentially many branches.}
    the space requirement for DFS is linear in the depth of $\mathcal{T}$.
    We conclude the proof with observing that the depth of $\mathcal{T}$ is $\bigO(\vert T \vert) = \bigO(m)$, since we increase at every other level in $\mathcal{T}$ the timestamp.
    Each game state requires $\bigO(x)$ space, since we need to store, in the worst case, that many delayed arcs.
\end{proof}
Using \cref{lem:dp-algorithm-running-time,lem:dp-algorithm-space} we summarize the main result of this section in \cref{thm:dp-algorithm}.
\begin{theorem}
    \label[theorem]{thm:dp-algorithm}
    \looseness=-1
    Let $I = (G = (V, E), s, z, x, \delta)$ be an instance of \gamenamelong{} with $n = \vert V \vert$ and $m = \vert E \vert$.
    We can solve % \gamename{} for
    $I$ in $\bigO(n \cdot m^2 \cdot \left(m + 1\right)^{2x})$ time using $\bigO(x\cdot m)$ space. Therefore, \gamenamelong\ is in \PSPACE.
\end{theorem}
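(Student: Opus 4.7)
The plan is to simply assemble the pieces already set up: by \cref{lem:dp-recurrence relation} the traveler has a winning strategy in $I$ iff $F(s,1,\emptyset) = \texttt{true}$, so it suffices to compute this single table entry. I would invoke \cref{lem:dp-algorithm-running-time} for the $\bigO(n \cdot m^2 \cdot (m+1)^{2x})$ running time bound and \cref{lem:dp-algorithm-space} for the $\bigO(x \cdot m)$ space bound of the DFS-based evaluation scheme. Both bounds apply directly to the computation of $F(s,1,\emptyset)$.

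For \PSPACE{} containment I would argue as follows. The input includes the graph $G$, whose size is $\bigOmega(n+m)$, and the integers $x$ and $\delta$. The instance definition of \gamename{} restricts $x \leq |E| = m$, so the space bound $\bigO(x \cdot m)$ is at most $\bigO(m^2)$, which is polynomial in the input length. Therefore the DFS-based evaluation of the recurrence runs in polynomial space, placing \gamename{} in \PSPACE. The only mildly nontrivial thing to double-check is that arithmetic on timestamps (values in $T$) fits in polynomial space: since every element of $T$ is either $1$, an arrival time $t(e) + \lambda(e)$, or a delayed arrival time $t(e) + \lambda(e) + \delta$, its encoding length is bounded by $\bigO(\log(\max_e t(e) + \max_e \lambda(e) + \delta)) = \bigO(\text{input size})$, so storing each game state stays within the $\bigO(x \cdot m)$ bound up to polynomial factors. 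There is no real obstacle here; the theorem is essentially a bookkeeping corollary of the two preceding lemmas together with the input restriction $x \leq m$.
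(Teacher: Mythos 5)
Your proposal is correct and follows essentially the same route as the paper, which also obtains the theorem by combining \cref{lem:dp-recurrence relation} (correctness of computing $F(s,1,\emptyset)$) with \cref{lem:dp-algorithm-running-time,lem:dp-algorithm-space} and the instance restriction $x \leq |E|$. Your extra check that timestamp encodings stay polynomial in the input size is a reasonable refinement the paper leaves implicit, but it does not change the argument.
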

% The following corollary of \cref{thm:dp-algorithm} will become important in the upcoming section.\soeren{I dont think we need this sentence}
% \begin{corollary}
%   \gamename{} is in \PSPACE.
% \end{corollary}
% \soeren[inline]{This proofs \PSPACE-containment. This should be stated.} ms: Seems done?

\section{\gamenamelong{} is \PSPACE-hard}
\label[section]{sec:PSPACE-hardness}
\appendixsection{sec:PSPACE-hardness}

A quantified boolean formula (QBF) is a formula $\phi = Q_1x_1Q_2x_2 \dots Q_nx_n\varphi$ with $Q_i \in \{\exists, \forall\}$. Deciding satisfiability of QBFs is well known to be \PSPACE-complete. We show \PSPACE-hardness by providing a reduction from deciding satisfiability of QBFs. Similar to our problem, deciding satisfiability can be formulated as finding a winning strategy for a 2-player game. In the \probname{QBF Game}, we have an $\exists$-player and a $\forall$-player. For each $Q_i$, if $Q_i=\exists$ the $\exists$-player selects a truth assignment for $x_i$. Otherwise, if $Q_i=\forall$ the $\forall$-player selects a truth assignment for $x_i$. The $\exists$-player wins if after the $n$-th round $\varphi$ has a satisfying assignment. A QBF $\phi$ is satisfiable if and only if the $\exists$-player has a winning strategy \cite{shukla_survey_2019}.

We provide a reduction from \probname{QBF Game} to \gamenamelong{}. Let $\phi = Q_1x_1Q_2x_2 \dots Q_nx_n\varphi$ be a QBF. We can safely assume that $\varphi$ is in conjunctive normal form, \ie,~$\varphi = C_1 \land \dots \land C_m$. We construct an instance $I = (G, s, z, x, \delta)$ of the \gamenamelong{} from $\phi$.
In particular we will create $G$ based on the order and type of quantifiers in $\phi$ by placing a gadget (a specific subgraph) for every such quantifier.
The travel time of every arc $e$ in $G$ is uniformly set to $\lambda(e)=1$.
Additionally, we specifiy a start vertex $s$ and an end vertex $z$ in $G$.
Finally we set $\delta = 1$ and the budget $x=n(m + |\forall|) + 1$, where $|\forall|$ is the number of universally quantified variables.
The significance of the value of $x$ will become obvious with the construction of $G$. %\soeren{I rewrote this} sr: seems fine

% by combining a series of gadgets. For the sake of simplicity, we impose uniform traversal times for all arcs, specifically $\lambda(e)=1$ for every arc $e \in E(G)$. Furthermore, let the delay $\delta = 1$ and the budget $x=n(m + |\forall|) + 1$\sebastian[]{Might have to be updated}\christoph[]{Should be updated already}, where $|\forall|$ is the number of universally quantified variables.

In this section we will first describe the structure of the created instance (in particular the construction of $G$).
We also state some observations about the forced behavior of the traveler and the adversary when playing the \gamenamelong{} on $I$.
Finally, we will describe the equivalences between the choices of the traveler/the adversary and the $\exists$-player/the $\forall$-player in the \probname{QBF Game}.

\subsection{Construction of the graph $G$}
\looseness=-1
First let  $Q_i$ be the $i$-th quantifier, quantifying the variable $x_i$.
We place a gadget for $Q_i$, which is a specific subgraph of $G$.
If $Q_i$ is an existential quantifier we place the existential gadget $\mathcal{G}^\exists_i$, otherwise we place the universal gadget $\mathcal{G}^\forall_i$, both defined below.
Every such gadget has an entry vertex $s_i$ and an exit vertex $s_{i+1}$, \ie, the exit vertex of a gadget $\mathcal{G}_i$ is the entry vertex of the next gadget $\mathcal{G}_{i+1}$, with the exception of $\mathcal{G}_n$, whose exit vertex $s_{n+1}$ is not an entry vertex, since no following gadget exists.
The starting time of all outgoing arcs of $s_i$ will be identical and denoted as $t_{s_i}$.
We define $t_{s_i} = (i - 1) \cdot 2(m + 2)$.
Thus, the traveler needs to be able to traverse the variable gadget in $2(m+2)$ units of time or less to have a winning strategy, otherwise they arrive at the next gadget and can not continue on from $s_{i+1}$.
We will now first describe the construction of the two gadget types in detail, before carrying on with the further construction of $G$.

\begin{figure}[tbp]
    \centering	
    \begin{subfigure}[t]{.48\linewidth}
        \centering
		\includegraphics[width=\linewidth, page=6]{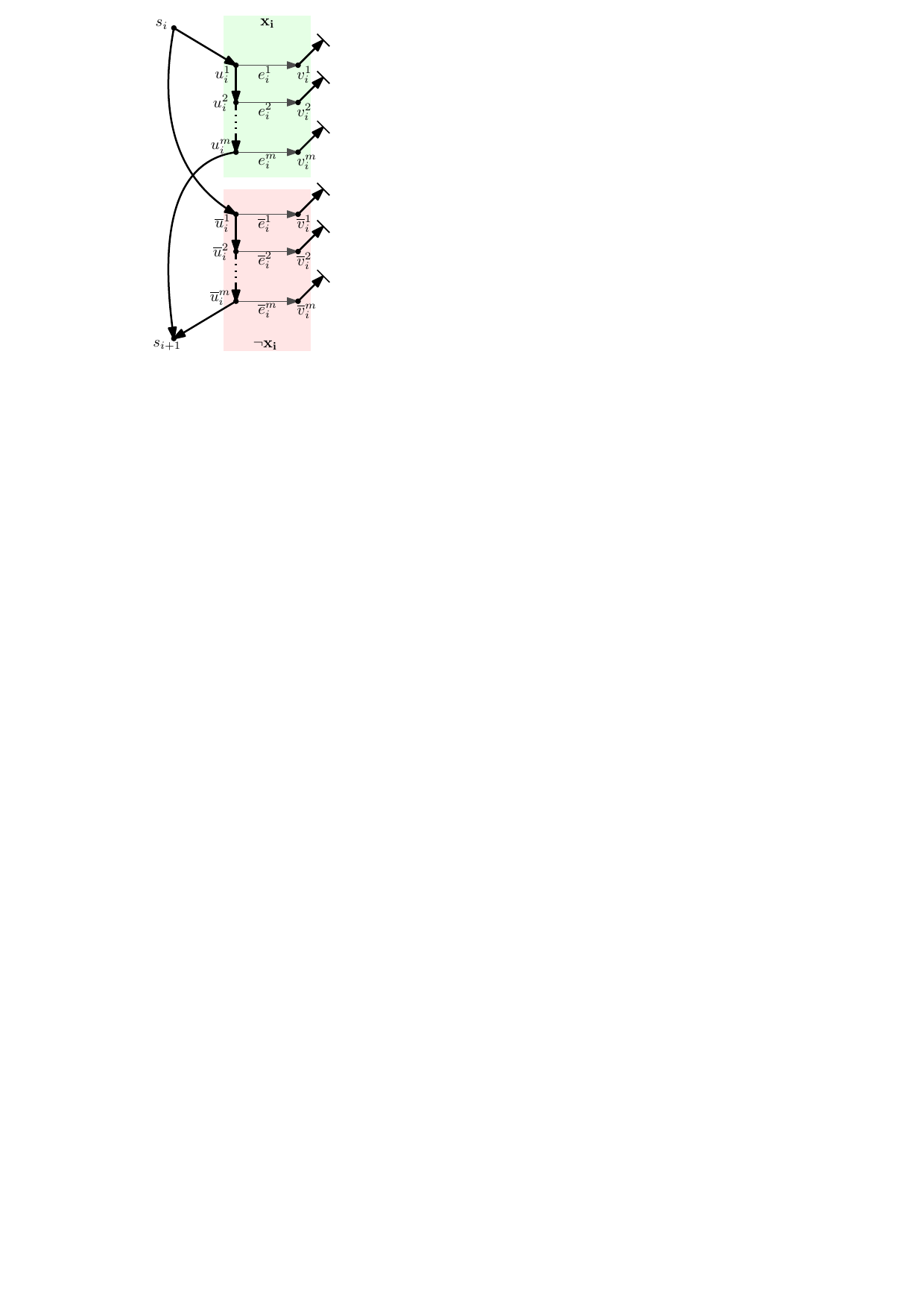}
		\caption{Existential variable gadget}
        \label{fig:exist_variable_gadget}
    \end{subfigure}%
    \hfill
    \begin{subfigure}[t]{.48\linewidth}
		\centering
		\includegraphics[width=\linewidth, page=7]{hardness_proof.pdf}
		\caption{Universal variable gadget}
        \label{fig:universal_variable_gadget}
    \end{subfigure}
    \caption{Depiction of an existential and universal variable gadget. Escape arcs to $z$ are depicted as upwards pointing arcs.}
    \label{fig:variable_gadget}
\end{figure}

\paragraph{Existential variable gadget.}
We begin with the existential variable gadget $\mathcal{G}^\exists_i$ (see \cref{fig:exist_variable_gadget}).
For each clause $C_j$, the gadget contains two vertex pairs $u_i^j, v_i^j$ and $\overline{u}_i^j, \overline{v}_i^j$ connected via arcs $e_i^j=(u_i^j, v_i^j, t_{c_j}, 1)$ and $\overline{e}_i^j = (\overline{u}_i^j, \overline{v}_i^j, t_{c_j}, 1)$ respectively.
We will call $e_i^j$ a \emph{positive} and $\overline{e}_i^j$ a \emph{negative} arc.
%These arcs may later be traversed in the evaluation phase and will be discussed in more detail then.
It is important to mention that $t_{c_j} > t_{s_{n+1}}$, \ie, any positive or negative arc of any gadget has a leaving time later than the outgoing arcs of $s_{n+1}$.

Furthermore, there are two distinct paths from $s_i$ to $s_{i+1}$, namely $P_i = \langle s_i, u_i^1, \dots, u_i^m, s_{i+1} \rangle$ and $\overline{P}_i = \langle s_i, \overline{u}_i^1, \dots, \overline{u}_i^m, s_{i+1} \rangle $.
The traveler chooses one of these two paths by choosing which of the two outgoing arcs at $s_i$ they traverse.
%subsequently decides the truth assignment of variable $x_i$.
The starting times of two consecutive arcs of both $P_i$ and $\overline{P}_i$ are always two time units apart, thus the traveler requires at most $2(m+1)$ units of time to travel from $s_i$ to $s_{i+1}$.
%Observe that since $\delta = 1$, the adversary cannot prevent the traveler from doing so by delaying any or all arcs of $P_i$ and $\overline{P}_i$.
Lastly, at each vertex $u_i^j, \overline{u}_i^j$ there will be an arc to $z$, starting at timestamp $t_{c_j}+1$.
We will refer to such arcs as \textit{escape arcs}.

\begin{observation}\label{obs:path_traversal}
	If the traveler arrives at $s_i$ of $\mathcal{G}^{\exists}_i$ at a timestamp $t\leq t_{s_i}$, the adversary cannot prevent the traveler from arriving at $s_{i+1}$ at a timestamp $t' \leq t_{s_{i+1}}$ by delaying any or all arcs of $P_i$ and $\overline{P}_i$.
\end{observation}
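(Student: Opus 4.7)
The plan is to fix the traveler's strategy as: commit at $s_i$ to one of the two paths (say $P_i$; the argument for $\overline{P}_i$ is symmetric) and traverse it to $s_{i+1}$, ignoring the positive/negative and escape arcs along the way. Write the vertices of $P_i$ as $u_i^0 = s_i, u_i^1, \ldots, u_i^m, u_i^{m+1} = s_{i+1}$, and let the $k$-th arc go from $u_i^{k-1}$ to $u_i^k$ with original start time $t_{s_i} + 2(k-1)$ and traversal time $\lambda = 1$. A delay of $\delta = 1$ shifts this start time to $t_{s_i} + 2k - 1$.

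The core of the argument is a short induction on $k \in \{0, 1, \ldots, m+1\}$ showing that, under \emph{any} subset of delays applied to arcs of $P_i$, the traveler arrives at $u_i^k$ no later than time $t_{s_i} + 2k$. The base case $k=0$ follows from the hypothesis $t \leq t_{s_i}$. For the inductive step, the arrival time at $u_i^{k-1}$ is at most $t_{s_i} + 2(k-1)$, which is no later than the (possibly delayed) start time of the $k$-th arc, which lies in $\{t_{s_i} + 2(k-1),\, t_{s_i} + 2k - 1\}$; so the traveler can board that arc and arrives at $u_i^k$ by time at most $(t_{s_i} + 2k - 1) + 1 = t_{s_i} + 2k$.

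Setting $k = m+1$ then yields an arrival time at $s_{i+1}$ of at most $t_{s_i} + 2(m+1) \leq t_{s_i} + 2(m+2) = t_{s_{i+1}}$, which is exactly the required bound. Since the same argument applies verbatim to $\overline{P}_i$, the traveler can always reach $s_{i+1}$ in time by committing to either path, regardless of which arcs of $P_i$ and $\overline{P}_i$ the adversary delays.

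There is really no hard step here; the whole proof is a one-line slack computation: the gap of $2$ time units between consecutive arc start times along $P_i$ (and $\overline{P}_i$) comfortably absorbs the unit delay $\delta = 1$, so delays do not cascade. Notably, the argument does not exploit the adversary's restriction to delaying arcs incident to the traveler's current vertex nor any bound on the remaining budget: the bound holds even in the worst case where every arc of both $P_i$ and $\overline{P}_i$ has been delayed.
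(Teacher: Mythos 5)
Your proof is correct and takes essentially the same route as the paper: the paper's own two-sentence argument rests on exactly the slack computation you formalize, namely that the two-unit gap between consecutive start times along $P_i$ and $\overline{P}_i$ absorbs the unit delay $\delta=1$, so no delay can cascade or strand the traveler. Your explicit induction is just a more detailed write-up of that observation, so there is nothing to add or correct.
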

\begin{proof}
	The traveler can clearly take either of the two arcs $(s_i, v^1_i), (s_i,\overline{v}^1_i)$, if they are delayed or not.
	Since the arrival time of any arcs on the paths is always two units before the leaving time of the next and $\delta=1$, the observation follows.
\end{proof}

% \begin{observation}\label{obs:escape_arcs_delay}
% 	It is never beneficial for the adversary to delay an escape arc.
% \end{observation}
% \begin{proof}
% 	If the traveler arrives at a vertex with an escape arc in time to traverse it, delaying the arc, does not change that the traveler will reach $z$.
% 	Delaying an escape arc will therefore only benefit the traveler.
% \end{proof}
% \christoph[]{It doesn't really benefit the traveler. Either way, it would reach $z$. It might consume a budget but this doesn't really matter at this point. In general, I think we can omit observation 4.2 probably entirely to save space.}

\begin{observation}\label{obs:edge_delays}
    \looseness=-1
	If the traveler arrives at a vertex $v^j_i$ of $\mathcal{G}^{\exists}_i$, they can reach $u^j_i$ at a timestamp $t\leq t_{c_j}+1$ and subsequently reach $z$ if and only if, the adversary does not delay $e^j_i$.
\end{observation}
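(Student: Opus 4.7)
The plan is to prove both directions of the biconditional separately, using the fact that $e^j_i$ is the unique arc inside $\mathcal{G}^{\exists}_i$ with head $u^j_i$ that the traveler can reach from $v^j_i$ in time to catch the escape arc, whose time label is exactly $t_{c_j}+1$.

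For the forward direction ($\Leftarrow$), suppose the adversary does not delay $e^j_i$. Since $e^j_i=(u^j_i, v^j_i, t_{c_j}, 1)$ (interpreted in the direction that allows the traveler at $v^j_i$ to reach $u^j_i$ in a single hop, as dictated by the gadget layout) has time label $t_{c_j}$ and traversal time~$1$, the traveler traverses $e^j_i$ and arrives at $u^j_i$ at timestamp $t_{c_j}+1$. The escape arc from $u^j_i$ to $z$ starts at timestamp $t_{c_j}+1$, so the traveler can take it and reach $z$ at $t_{c_j}+2$.

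For the backward direction ($\Rightarrow$), suppose the adversary delays $e^j_i$. Then the time label of $e^j_i$ becomes $t_{c_j}+1$ (since $\delta=1$), so traversing it now yields arrival time $t_{c_j}+2$ at $u^j_i$. Since $e^j_i$ is the only arc in the gadget whose head is $u^j_i$ and that is reachable from $v^j_i$, the traveler cannot be at $u^j_i$ at any timestamp $\leq t_{c_j}+1$. Consequently the escape arc from $u^j_i$, which has time label $t_{c_j}+1$, is no longer catchable, and the traveler has no other means to reach $z$ from $u^j_i$ at that point.

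The main obstacle is the uniqueness argument in the backward direction: one must rule out alternative routes that could bring the traveler to $u^j_i$ on time or to $z$ through some other exit. This is handled by appealing to the gadget construction, namely that all other arcs incident to $u^j_i$ lie on $P_i$ and carry strictly larger time labels, and that all remaining routes to $z$ pass through an escape arc whose label equals $t_{c_\ell}+1$ for some clause~$C_\ell$ reached later; in particular, since $t_{c_j} > t_{s_{n+1}}$ and the escape arcs are the only arrows into $z$, no detour through subsequent vertices can repair the lost single time step.
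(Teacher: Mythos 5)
Your proof follows essentially the same route as the paper's: everything reduces to the timing comparison $t(e^j_i)+\lambda(e^j_i)=t_{c_j}+1=t(a)$ for the escape arc $a$ when $e^j_i$ is undelayed, versus $t_{c_j}+2>t(a)$ when it is delayed, plus the fact that $e^j_i$ is the only way into $u^j_i$. The paper additionally dismisses the case of a delayed escape arc with a one-line w.l.o.g.\ (a delay only postpones $t(a)$ and so can never hurt the traveler), which your argument leaves implicit but which does not affect correctness.

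One caution about your closing paragraph: the claim that the traveler then has ``no other means to reach $z$'' and that no detour through later vertices can recover the lost time step is neither needed nor actually established by the gadget-local argument. If $x_i$ occurs in $C_j$, a traveler sitting at $u^j_i$ at timestamp $t_{c_j}+2$ can still take the clause-gadget arc $(u_i^j,c_{j+1})$ with label $t_{c_j}+2$ and continue into the evaluation phase; ruling out this ``jump ahead'' route is exactly what the paper does later via the budget argument around \cref{lem:delay_budget_0}, not here. Fortunately the observation only asserts the conjunction ``reach $u^j_i$ by $t_{c_j}+1$ and subsequently reach $z$,'' and when $e^j_i$ is delayed the first conjunct already fails by your uniqueness-of-the-incoming-arc step, so the proof of the stated equivalence goes through without that extra claim.
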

\begin{proof}
	We can assume that the escape arc $a$ at $u^j_i$ is not delayed since delaying an escape arc can never prevent the traveler from reaching $z$.
	The observation immediately follows from the fact that $t(e^j_i) + \lambda(e^j_i) = t(a)$ if $e^j_i$ is not delayed and $t(e^j_i) + \lambda(e^j_i) > t(a)$ if it is.
\end{proof}

The following \cref{lem:existential-traversal} summarizes the properties of the gadget, taking into account Observation~\ref{obs:path_traversal} and Observation~\ref{obs:edge_delays}.
\begin{lemma}%[\appsymb]
\label[lemma]{lem:existential-traversal}
  If the traveler arrives at $s_i$ of $\mathcal{G}^{\exists}_i$ at a timestamp $t\leq t_{s_i}$, either they can reach $z$ or they arrive at $s_{i+1}$ at a timestamp $t' \leq t_{s_{i+1}}$ via $P_i$ (or $\overline{P}_i$) and the adversary has delayed all $m$ arcs $e_i^j$ (or all $m$ arcs $\overline{e}_i^j$).
\end{lemma}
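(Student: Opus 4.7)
The plan is to combine the two preceding observations by a case analysis on the traveler's first move at $s_i$, and then a case analysis along the chosen path. Since the gadget is completely symmetric between $P_i$ and $\overline{P}_i$, I would argue only for $P_i$ and then invoke symmetry for $\overline{P}_i$.

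First I fix an arbitrary play of the game starting when the traveler reaches $s_i$ at some timestamp $t \leq t_{s_i}$. At $s_i$ the adversary announces some delays on the two outgoing arcs, and then the traveler picks one of them, which uniquely commits them to $P_i$ or $\overline{P}_i$; assume they pick $P_i$. By \cref{obs:path_traversal}, regardless of the adversary's delays on any of the path arcs of $P_i$ (and of $\overline{P}_i$), the traveler can complete $P_i$ and reach $s_{i+1}$ no later than $t_{s_{i+1}}$. Hence the only way the traveler could fail to exit on time via $s_{i+1}$ is if at some point they leave $P_i$ — and the only places to leave $P_i$ are at the branching vertices $v_i^j$ leading (via $u_i^j$ and the escape arc) to $z$.

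Next I would analyse what happens at each such branching point. By \cref{obs:edge_delays}, when the traveler is at $v_i^j$ they can reach $z$ via $u_i^j$ if and only if $e_i^j$ is not delayed; conversely, if $e_i^j$ has been delayed then the attempt to escape at clause $j$ along $P_i$ fails, and the traveler's only productive option is to continue along $P_i$. So the traveler reaches $z$ as soon as they meet any $j\in\{1,\dots,m\}$ with $e_i^j$ not delayed. Therefore, in the alternative case, where the traveler does not reach $z$, the adversary must have delayed every one of the $m$ arcs $e_i^1,\dots,e_i^m$; and in that case the traveler, having no beneficial deviation from $P_i$, arrives at $s_{i+1}$ at some $t' \leq t_{s_{i+1}}$ by \cref{obs:path_traversal}. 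The symmetric argument for $\overline{P}_i$ delivers the analogous conclusion about the $\overline{e}_i^j$ arcs, which together with the $P_i$ case yields the disjunction in the statement.

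The main obstacle is not a technical calculation but phrasing the dichotomy cleanly: the statement implicitly quantifies over traveler strategies (``they can reach $z$'') and over adversary choices (``the adversary has delayed all $m$ arcs''). I would resolve this by fixing an adversary strategy, asking whether the traveler has any response that reaches $z$, and showing that the negation of this forces the adversary to delay all $m$ clause arcs of the chosen path; the traveler's then-best response is to stay on the path, which by \cref{obs:path_traversal} delivers them to $s_{i+1}$ in time. Everything else is a direct appeal to the two observations.
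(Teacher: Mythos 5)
Your proposal is correct and follows essentially the same route as the paper's proof: take the path choice w.l.o.g.\ (say $P_i$), use \cref{obs:edge_delays} at each intermediate vertex to conclude that the traveler escapes to $z$ unless the adversary delays $e_i^j$, and use \cref{obs:path_traversal} to guarantee timely arrival at $s_{i+1}$ when all $m$ arcs are delayed. The extra discussion of how the strategy quantifiers are handled is fine but does not change the argument.
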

\appendixproof{lem:existential-traversal}{%
% \begin{proof}
    \looseness=-1
	By Observation~\ref{obs:path_traversal}, we know that the arrival at $s_{i+1}$ can not be prevented.
	The traveler has to start by taking one of the two arcs $(s_i, v^1_i), (s_i,\overline{v}^1_i)$.
	Assume, \wlogg, they choose $(s_i, v^1_i)$.
	From there by Observation~\ref{obs:edge_delays} we know that either the traveler can reach $z$ or the adversary delays an arc $e_i^j$.
	This repeats $m$ times in total until the traveler reaches $s_{i+1}$.
        % \end{proof}
} % end appendixproof        
% A corollary of Lemma~\ref{lem:existential-traversal} is that a traveler can always reach $z$ if they arrive at $s_i$ at a timestamp $t\leq t_{s_i}$ and the adversary has a budget of less than $m$ left.
% \christoph[]{As discussed, we actually need $m+1$ budget, but this would need an additional lemma/observation.}
% \soeren{This is now explained in the clause gadget section.}
% \christoph[]{Looks good!}

% \begin{figure}
% 	\centering
% 	\includegraphics[page=9, width=\linewidth]{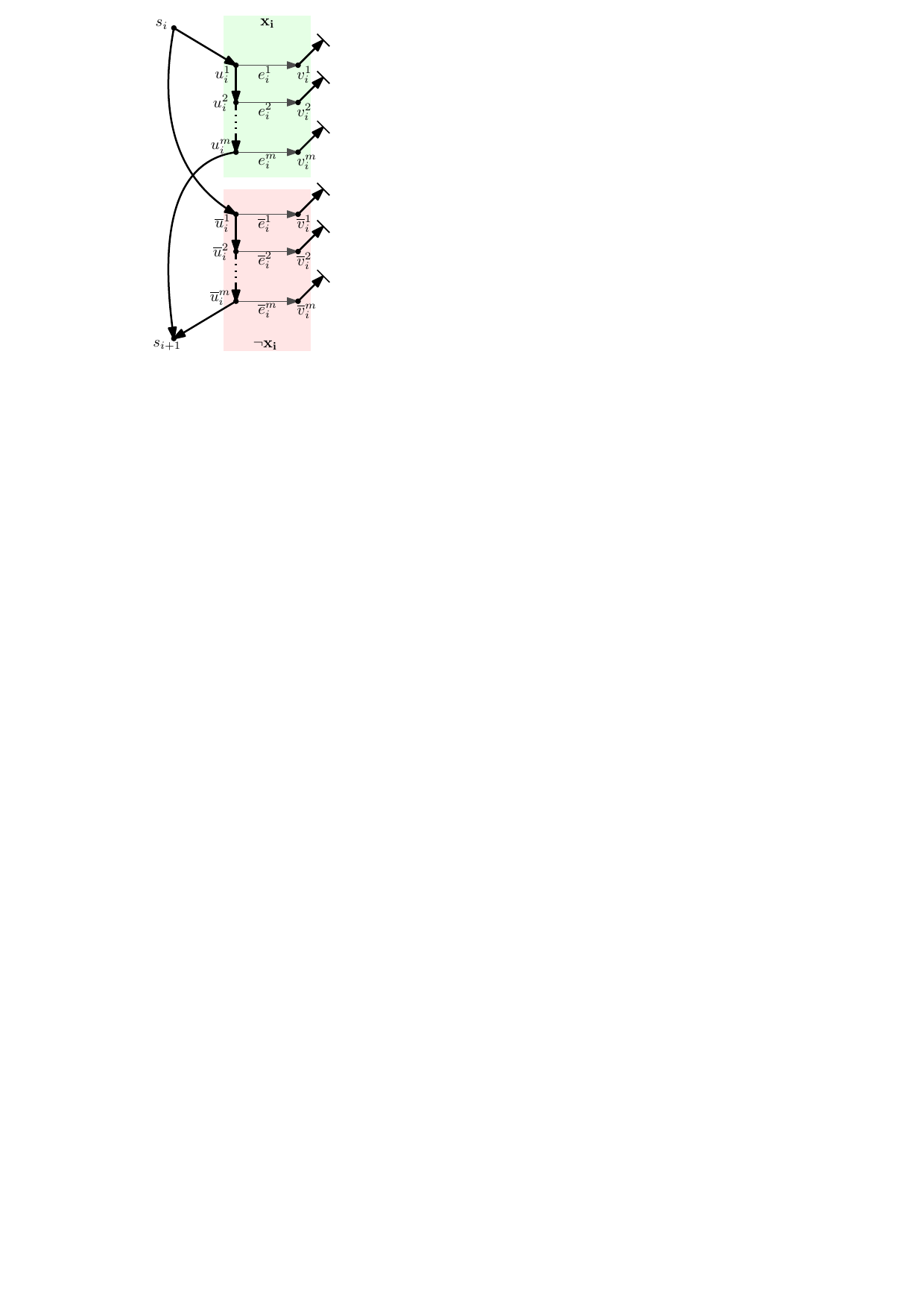}
% 	\caption{Resulting graph $G$ from reducing formula $\phi = \exists x_1 \forall x_2 \exists x_3 ((x_1 \lor \lnot x_2 \lor \lnot x_3) \land (\lnot x_1 \lor x_2 \lor \lnot x_3) \land (x_1 \lor \lnot x_2 \lor x_3))$. On the right you see the two phases extracted from $G$.}
% 	\label{fig:reduction_example}
% \end{figure}

\looseness=-1
\paragraph{Universal variable gadget.}
The universal variable gadget $\mathcal{G}^{\forall}_i$ has a very similar structure to the existential variable gadget with only two minor modifications (see \cref{fig:universal_variable_gadget}).
First, we add two additional vertices $u_i^{m+1}$ and $v_i^{m+1}$ which are connected by an arc $e_i^{m+1} = (u_i^{m+1}, v_i^{m+1}, t_{c_{j+1}}, 1)$.
Again, $t_{c_{j+1}} > t_{s_{n+1}}$ holds.
There will be an escape arc from $v_i^{m+1}$ leaving at timestamp $t_{c_{j+1}}+1$.
Furthermore, the node $u_i^{m+1}$ is inserted into the path $P_i$ after node $u_i^m$.
The starting times of the arcs along path $P_i$ are adapted accordingly to guarantee the two time unit difference between two consecutive arcs along the path.
Secondly, the starting times for all arcs along path $P_i$, except for the first arc, are decreased by one unit.
Thus, the traveler requires at most $2(m+1) + 1$ units of time to travel from $s_i$ to $s_{i+1}$ along path $P_i$.
Everything else remains the same.

Note that while Observation~\ref{obs:edge_delays} directly translates to $\mathcal{G}^{\forall}_i$, Observation~\ref{obs:path_traversal} does not.
Instead we make the following observation.

\begin{observation}\label{obs:forall_path_traversal}
	If the traveler arrives at $s_i$ of $\mathcal{G}^{\forall}_i$ at a timestamp $t\leq t_{s_i}$, the adversary can prevent the traveler from traversing $P_i$ by delaying the arc $(s_i, v^1_i)$.
\end{observation}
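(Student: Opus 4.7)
The plan is to exhibit the specific adversary strategy described in the statement and trace the resulting timings to show the traveler cannot complete $P_i$. Since the observation asserts a concrete strategy (delay $(s_i, v^1_i)$), I do not need to reason over all possible adversary moves; I only need to verify that this one move suffices.

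First I would observe that if the traveler never takes $(s_i, v^1_i)$ at $s_i$, then by definition they do not traverse $P_i$, so I may assume they use this arc. Using the adversary's budget to delay $(s_i, v^1_i)$ by $\delta = 1$, I would compute the updated departure and arrival times: the first arc now leaves at $t_{s_i} + 1$ and, with traversal time $1$, deposits the traveler at $v^1_i$ at time $t_{s_i} + 2$.

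Next I would invoke the construction of $\mathcal{G}^{\forall}_i$, specifically the rule that the starting times of all arcs along $P_i$ \emph{except the first} are decreased by one unit compared to $\mathcal{G}^{\exists}_i$. In the existential gadget the second arc on $P_i$ (the unique $P_i$-arc leaving $v^1_i$) starts at $t_{s_i} + 2$, so in the universal gadget it starts at $t_{s_i} + 1$. Since the traveler arrives at $v^1_i$ only at $t_{s_i} + 2 > t_{s_i} + 1$, the temporal-walk condition $\arrival{e_1} \leq t(e_2)$ fails for the unique $P_i$-continuation. Hence the traveler cannot take the second arc on $P_i$, and so cannot traverse $P_i$ at all.

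There is no real obstacle; the argument is a direct unwinding of the construction of $\mathcal{G}^{\forall}_i$, and the only care needed is keeping the bookkeeping straight when comparing the universal-gadget timings to those of the existential gadget. The single subtlety is that the first arc's starting time is \emph{not} decreased, which ensures that exactly one unit of delay applied to $(s_i, v^1_i)$ is enough to make the traveler miss the (unique) successor arc on $P_i$ that was shifted one unit earlier.
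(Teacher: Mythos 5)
Your proof is correct and follows the same route as the paper's one-line argument: delaying $(s_i, v^1_i)$ yields $t(s_i, v^1_i) + \lambda(s_i, v^1_i) = t_{s_i} + 2 > t_{s_i} + 1 = t(v^1_i, v^2_i)$, so the traveler misses the unique continuation of $P_i$. Your version simply spells out the timing bookkeeping (the one-unit earlier start of the non-first arcs of $P_i$ in $\mathcal{G}^{\forall}_i$) that the paper leaves implicit.
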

 \ifarxiv
\begin{proof}
	This follows directly from $t(s_i, v^1_i) + \lambda(s_i, v^1_i) > t(v^1_i, v^2_i)$ if $(s_i, v^1_i)$ is delayed.
\end{proof}
\fi

Note that the traversal of $\overline{P}_i$ can still not be impeded by the adversary.

% We will now describe the three \christoph[]{Aren't there actually 4 outcomes? Reaching $z$ should also be one} possible outcomes of a traveler traversing $\mathcal{G}^\forall_i$.
% However note that one of the three possibilities (Situation b in Lemma~\ref{lem:universal-traversal}), will lead to a forced loss of the traveler later on.
% We will argue this after completing the construction of all gadgets.

Similarly to the existential variable gadget, the traveler will either reach $z$ or $s_{i+1}$ after traversing the gadget. However, in the latter case, three different outcomes may arise with respect to the number of delayed edges. Note that one of the possibilities (Situation b in Lemma~\ref{lem:universal-traversal}) will lead to a forced loss of the traveler later on.
We will argue this after presenting all gadgets. 

\begin{lemma}%[\appsymb]
  \label[lemma]{lem:universal-traversal}
  \looseness=-1
  If the traveler arrives at $s_i$ of $\mathcal{G}^{\forall}_i$ at a timestamp $t\leq t_{s_i}$, either they can reach $z$ or the traveler reaches $s_{i+1}$ at timestamp $t' \leq t_{s_{i+1}}$ and one of the following situations occurs:
  %	he arrives at $s_{i+1}$ at a timestamp $t' \leq t_{s_{i+1}}$ via $P_i$ (or $\overline{P}_i$) and the adversary has delayed all $m$ arcs $e_i^j$ (or all $m$ arcs $\overline{e}_i^j$).
  %	
  %	Given an universal variable gadget for a variable $x_i$ and assuming that the both players play optimally. If the traveler arrives at $s_i$ at timestamp $t \leq t_{s_i}$ and the adversary has at least a delay budget of $m+2$ remaining, then only one of the following situations can occur: 
  \begin{compactenum}
  \item[(a)] All $m+1$ positive arcs $e_i^j$ are delayed.
  \item[(b)] All $m$ negative arcs $\overline{e}_i^j$ are delayed.
  \item[(c)] All $m$ negative arcs $\overline{e}_i^j$ as well as the first arc in $P_i$ are delayed.
  \end{compactenum}
\end{lemma}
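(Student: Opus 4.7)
The plan is to perform a case analysis on which initial arc out of $s_i$ the traveler commits to, that is, $(s_i, v_i^1)$ to enter $P_i$ or $(s_i, \overline{v}_i^1)$ to enter $\overline{P}_i$. In both cases the core argument is a repeated application of Observation~\ref{obs:edge_delays} (which translates directly to $\mathcal{G}^\forall_i$) at every intermediate escape vertex, forcing the dichotomy: either the traveler escapes to $z$ via an escape arc at that step, or the adversary must have delayed the corresponding positive (resp. negative) arc.

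For the $\overline{P}_i$ branch I would first note that $\overline{P}_i$ keeps the same timing as in the existential gadget, so the proof of Observation~\ref{obs:path_traversal} applies verbatim to $\overline{P}_i$: the adversary cannot prevent arrival at $s_{i+1}$ by $t_{s_{i+1}}$ via delays on $\overline{P}_i$, including the preemptive delay of $(s_i, \overline{v}_i^1)$ that could be announced at $s_i$. Iterating Observation~\ref{obs:edge_delays} at each of the $m$ vertices $\overline{v}_i^j$ then shows that if the traveler never escapes, every $\overline{e}_i^j$ is delayed. Splitting on whether the adversary additionally announced a delay of $(s_i, v_i^1)$ while the traveler was still at $s_i$ produces situation (c) or situation (b), respectively.

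For the $P_i$ branch, Observation~\ref{obs:forall_path_traversal} says the choice is only viable when $(s_i, v_i^1)$ has not been delayed, since the start times along $P_i$ (except its first arc) have been shifted down by one unit, leaving no slack to absorb a $\delta = 1$ delay on the entry arc. Provided the entry is not blocked, the total traversal length of $2(m+1)+1 \leq 2(m+2)$ guarantees arrival at $s_{i+1}$ by $t_{s_{i+1}}$. Applying Observation~\ref{obs:edge_delays} at each of the $m+1$ vertices $v_i^j$ (including the additional $v_i^{m+1}$ inserted in the universal gadget), either the traveler escapes at some step or all $m+1$ positive arcs $e_i^j$ must be delayed, which is situation (a).

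The main technical nuisance I expect is the timing bookkeeping along the shifted path $P_i$ and the clean treatment of the preemptive delays the adversary may announce at $s_i$, both of which interact with the $\delta = 1$ slack; once these are in order, the remainder is a routine iteration of the Observation~\ref{obs:edge_delays}-style argument along the chosen path.
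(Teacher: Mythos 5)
Your proposal is correct and follows essentially the same route as the paper's proof: a case analysis on the entry arc/adversary's delay of the first arc of $P_i$, using Observation~\ref{obs:forall_path_traversal} to rule out $P_i$ when its first arc is delayed, the unchanged timing of $\overline{P}_i$ (Observation~\ref{obs:path_traversal}) to guarantee timely arrival, and the iterated escape-or-delay dichotomy of Observation~\ref{obs:edge_delays} to force situations (a), (b), or (c). The only difference is organizational (you case on the traveler's path choice first, the paper on the adversary's delay decision first), which does not change the argument.
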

\appendixproof{lem:universal-traversal}{%
%\begin{proof}
  At $s_i$ the adversary has the option to delay the first arc in $P_i$.
  If the adversary does not delay the first arc of $P_i$, the traveler can choose between path $P_i$ and $\overline{P}_i$.
  If they choose $P_i$ they would arrive at $s_{i+1}$ at timestamp $t_{s_i} + 2 m + 2$, delaying all remaining $m+1$ arcs $e_i^j$ (Situation a).
  If they instead choose path $\overline{P}_i$, they would arrive at timestamp $t_{s_i} + 2 m + 1$, delaying all $m$ arcs $\overline{e}_i^j$ (Situation b).
  If the adversary does delay the first arc of $P_i$, by Observation~\ref{obs:forall_path_traversal}, the traveler can not travel along $P_i$.
  Thus, the traveler would go along $\overline{P}_i$ and arrive at $s_{i+1}$ at timestamp $t_{s_i} + 2 m + 1$, delaying all $m$ arcs $\overline{e}_i^j$ (Situation c).
  %	This results in a total of $m+1$ arcs being delayed.
  We observe that regardless of the path choice, the adversary will arrive at $s_{i+1}$ before $t_{s_{i+1}}$.
  % At $s_i$ the adversary has the option to delay the start arc. If it does, the traveler will have to go to $\overline{u}^1_i$. \christoph[]{Maybe this needs a small argument as to why the traveler will not go to $u_i^1$} With the same argumentation as for \cref{lem:existential-traversal}\sebastian[]{something is wrong with the cref label definitions}, the traveler will end at $s_{i+1}$ at time $t_{s_i}+m+2 = t_{s_{i+2}}$, i.e. situation 2. If the traveler decides not to delay the start arc, the traveler is free to choose. Equivalently to the existential quantifier, the traveler can now choose between $u_i^1$ and $\overline{u}_i^1$ and ends up at $s_{i+1}$ at timepoint $t_{s_i}+m+2 = t_{s_{i+2}}$, i.e. situations 1 and 3.
%\end{proof}
}% end appendixproof

The construction of $G$ and the observed traversal behavior of the two players maps naturally to an assignment of a variable $x_i$ to either true or false if the set of all positive or all negative arcs in $\mathcal{G}_i$ is delayed, respectively.
We call the traversal of the traveler from $s$ until $s_{n+1}$, the \emph{assignment phase}.
Now we continue with the remaining construction of $G$, which will model the evaluation of this variable assignment.
The remaining traversal of the traveler starting at $s_{n+1}$ will accordingly be called the \emph{evaluation phase} (see \cref{fig:reduction_timeline,fig:reduction_example}).

\begin{figure}[tbp]
    \centering
    \includegraphics[page=2]{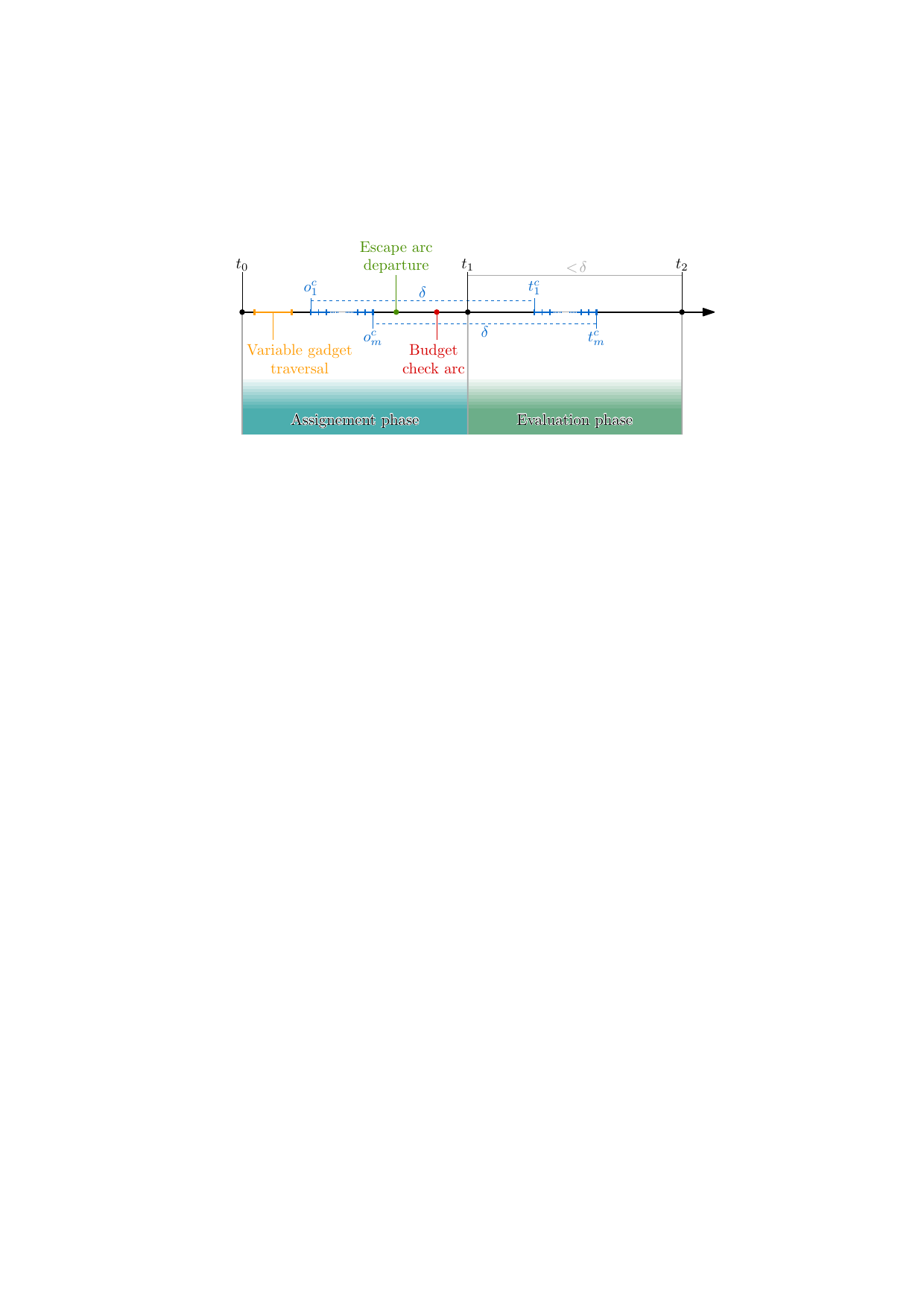}
    \caption{Schedule of the reduction}
    \label{fig:reduction_timeline}
\end{figure}

In the evaluation phase, we check whether the variable assignments implied by the choices of the traveler and the adversary result in each clause $C_j$ in $\varphi$ to evaluate to true.
This is done by placing a gadget $\mathcal{G}^C_j$ for every clause $C_j$.
The gadget $\mathcal{G}^C_j$ contains an entry node~$c_j$ and an exit node $c_{j+1}$ (which is again the entry node for $\mathcal{G}^C_{j+1}$).
The gadgets have to be traversed in sequential order.
The starting time of all outgoing arcs of $c_j$ is identical and denoted as $t_{c_j}$.
We define $t_{c_j}=t_{s_{n+1}} + 1 + 3 (j - 1)$ (three time units per clause).

%%%%%%%%%%%%%%%%% old version
% Thus, we create a clause gadget $\mathcal{G}^C_j$ for each clause $C_j$.
% Inside the gadget, the traveler can (and has to) revisit exactly one of the possibly delayed arcs $e_i^j$ (or $\overline{e}_i^j$) for each occurrence of $x_i$ (or $\neg x_i$) in $C_j$.
% If the traveler can travel across some $e_i^j$ (or $\overline{e}_i^j$), then the clause $C_j$ evaluates to true under the current variable assignment.
% Otherwise the traveler is stuck and the chosen variable assignment does not fulfill $\varphi$. 
% %In the evaluation phase, we check whether the assignments taken by the two players make $\varphi$ true. In order to do so, the traveler has to revisit the possibly delayed edge $e_i^j$ (or $\overline{e}_i^j$) for each occurrence of $x_i$ (or $\neg x_i$) in a clause $C_j$.

% %For every clause $C_i$ in $\varphi$, we add a clause gadget $\mathcal{G}^C_i$.
% The gadget $\mathcal{G}^C_j$ contains an entry node $c_j$ and an exit node $c_{j+1}$.
% Like for the variable gadgets, $c_{j+1}$ simultaneously acts as the entry node of $\mathcal{G}^C_{j+1}$, with the exemption of $c_{m+1}$.
% The traveler will have to traverse all clause gadgets in sequential order corresponding to the clauses $C_1,\dots,C_m$.
% The starting time of all outgoing arcs of $c_j$ is identical and denoted as $t_{c_j}$.
% We define $t_{c_j}=t_{s_{n+1}} + 1 + 3 (j - 1)$.
% Thus, the traveler needs to able to traverse $\mathcal{G}^C_j$ in three units of time.
%%%%%%%%%%%%%%%%%%% old version end

The last vertex of the assignment phase, $s_{n+1}$, is connected to $c_1$ by an arc with starting time $t_{s_{n+1}}$.
Furthermore, there will be an additional arc leaving $s_{n+1}$ to new vertex $v$, also leaving at timestamp $t_{s_{n+1}}$.
Vertex $v$ is connected by an escape arc to vertex $z$ with starting time $t_{s_{n+1}} + 1$.
Lastly, the final exit node $c_{m+1}$ of the last clause gadget is connected to target vertex $z$ with starting time $t_{c_{m+1}}$.

\begin{lemma}%[\appsymb]
  \label[lemma]{lem:delay_budget_0}
  When the traveler reaches $c_1$, if the adversary has delayed less than $n(m+|\forall|) + 1$ arcs, where $|\forall|$ is the number of universally bound variables, the traveler was able to reach $z$ in the assignment phase.
\end{lemma}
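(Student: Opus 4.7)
The plan is to prove the contrapositive: if the traveler's trajectory in the assignment phase ends at $c_1$ and the adversary has, up to that point, delayed strictly fewer than $x = n(m + |\forall|) + 1$ arcs, then at some earlier point the traveler had an available move leading to $z$. Equivalently, blocking every escape opportunity encountered during the assignment phase should consume essentially the adversary's entire budget.

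My main tool is a gadget-by-gadget accounting combined with a separate bookkeeping for the transition at $s_{n+1}$. First, I would invoke \cref{lem:existential-traversal} for each existential gadget $\mathcal{G}^\exists_i$: if the traveler leaves through $s_{i+1}$ rather than escaping to $z$, the adversary has spent at least $m$ delays inside the gadget (on the chosen path's arcs $e_i^j$ or $\overline{e}_i^j$). Second, I would invoke \cref{lem:universal-traversal} for each universal gadget $\mathcal{G}^\forall_i$: passing through without escape forces at least $m$ delays in situation~(b), and $m+1$ in situations~(a) and~(c). I would then sum these per-gadget contributions across all $n$ gadgets to obtain the bulk of the claimed bound.

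Next I would account for the ``$v$-escape'' at $s_{n+1}$. Both outgoing arcs of $s_{n+1}$ leave at $t_{s_{n+1}}$, and $v$ has an outgoing escape arc to $z$ at timestamp $t_{s_{n+1}} + 1$. Unless the adversary has delayed $(s_{n+1}, v)$, the traveler can simply branch off through $v$ and reach $z$ directly; this contributes the final $+1$ in the claimed total. Taking the contrapositive then yields the lemma: if the adversary has spent fewer than $x$ delays, some escape route must have remained available inside some gadget or at $s_{n+1}$, and the traveler had a strategy reaching $z$ during the assignment phase.

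The main obstacle I expect is in making the gadget-level accounting tight enough to reach exactly $n(m + |\forall|)$. The per-gadget lower bounds from \cref{lem:existential-traversal,lem:universal-traversal} alone add up, taken naively, to something closer to $nm$ with at most $|\forall|$ extra delays, rather than the full $nm + n|\forall|$ implied by the bound. Reaching the stronger bound will likely require exploiting the adaptive nature of the adversary's play and additional escape structure inside each gadget not fully spelled out in the excerpt -- for instance escape routes tied to previously quantified universal variables that must be re-blocked at every subsequent gadget. Pinning down this finer counting, and in particular ruling out that the adversary can save delays by coordinating across universal gadgets, is the crucial step.
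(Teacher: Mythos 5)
Your overall accounting scheme matches the paper's: per-gadget counts from \cref{lem:existential-traversal,lem:universal-traversal}, plus one extra delay for the arc $(s_{n+1}, v)$ whose escape arc the adversary must block. However, there is a genuine gap exactly where you stop: you leave unresolved the case that the paper's proof hinges on, namely Situation~(b) of \cref{lem:universal-traversal}, in which a universal gadget costs the adversary only $m$ delays. The paper closes this case not with any ``additional escape structure inside each gadget'' (there is none; your guess about escape routes tied to earlier universal variables being re-blocked in later gadgets does not correspond to anything in the construction), but with a stranding argument at $s_{n+1}$: if Situation~(b) occurred even once, the adversary reaches the end of the assignment phase with at least two units of budget left and delays \emph{both} outgoing arcs of $s_{n+1}$. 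A delayed $(s_{n+1}, v)$ makes the traveler miss the escape arc at $v$, and a delayed $(s_{n+1}, c_1)$ makes the traveler arrive at $c_1$ after $t_{c_1}$, so the traveler is stuck either way. Hence a traveler who reaches $c_1$ without ever having had an escape to $z$ available can never have allowed Situation~(b); this pins the count in every universal gadget to $m+1$, and together with the $+1$ spent on $(s_{n+1}, v)$ (which the adversary must delay, as you note, or the traveler escapes via $v$) gives the claimed total. Since your proposal explicitly defers ``ruling out that the adversary can save delays'' as the crucial open step, the proof is incomplete as written, and the direction you suggest for closing it would not succeed.

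On your arithmetic worry: you are right that the gadget lemmas force only $n\cdot m + |\forall|$ delays inside the gadgets, not $n\cdot m + n\cdot|\forall|$; the paper's budget expression $n(m+|\forall|)+1$ should be read (or corrected) as $n\cdot m + |\forall| + 1$. The paper's own proof uses exactly the per-gadget counts you computed and nothing finer, so the right reaction is not to search for hidden delays but to adjust the constant and supply the Situation-(b) stranding argument above.
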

\appendixproof{lem:delay_budget_0}{%
  % \begin{proof}
  For all existential gadgets, by \cref{lem:existential-traversal}, $m$ arcs have been delayed or the traveler was able to reach $z$.
  For all universal gadgets, according to \cref{lem:universal-traversal}, either $m+1$ (Situations a and c) or $m$ arcs (Situation b) have been delayed, or the traveler was able to reach $z$.
  If for all universal gadgets Situations a or c occurred, we are done.
  Assume that Situation b occurred at least once.
  In that case, less than $n(m+|\forall|)$ arcs have been delayed when the traveler arrives at node $s_{n+1}$.
  The adversary can now delay both outgoing arcs from $s_{n+1}$, leaving the adversary either stranded at vertex $v$ or $c_1$.
  Choosing Situation b therefore loses the game for the traveler.
  Thus the traveler will never choose Situation b and exactly $n(m+|\forall|)$ arcs have been delayed in the assignment phase.
  
  This leaves one delay remaining when the traveler arrives at node $s_{n+1}$.
  If adversary does not delay the arc to vertex $v$, the traveler can move to $v$ and take the escape arc.
  Therefore the adversary spends the remaining budget on delaying the arc $(s_{n+1},v)$, the traveler moves to vertex $c_1$ and $n(m+|\forall|) + 1$ arcs have been delayed.
        % \end{proof}
} % end appendixproof        

Further if the game progresses to a state, in which the traveler reaches $c_1$ and was so far unable to reach $z$, the budget of the adversary must be 0 and due to Lemma~\ref{lem:delay_budget_0} we will assume that this is the case for the remainder of this reduction.

\begin{figure}[tbp]
	\centering
	\includegraphics[width=.55\linewidth, page=8]{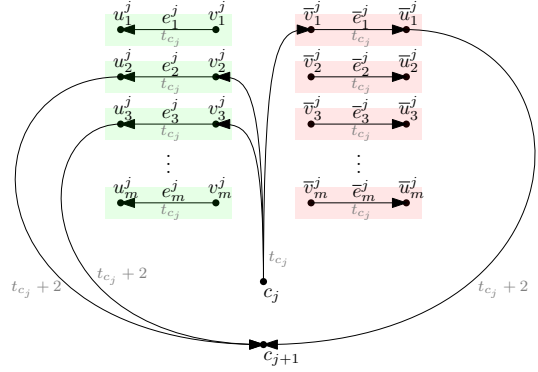}
	\caption{Clause gadget}
	\label{fig:clause_gadget}
\end{figure}

\paragraph{Clause gadget.}
\looseness=-1
The clause gadget for some clause $C_j$ has an arc from $c_j$ to $v_i^j$ (or $\overline{v}_i^j$) for each literal $x_i$ (or $\neg x_i$) in $C_j$ with starting time $t_{c_j}$.
Equivalently, there is an arc from $u_i^j$ (or $\overline{u}_i^j$) to $c_{j+1}$ for each literal $x_i$ (or $\neg x_i$) with starting time $t_{c_j} + 2$.
This forms a set of parallel paths, each containing some arc $e_i^j$ (or $\overline{e}_i^j$).
We will call the the set of arcs $e_i^j$ (or $\overline{e}_i^j$) the \emph{literal arcs} of $\mathcal{G}^C_j$.
Note that the set of literal arcs can include both positive and negative arcs of the variable gadgets.
The gadget is depicted in \cref{fig:clause_gadget}.
Also note that if an arc $a = (u_i^j, c_{j+1})$ (or $(\overline{u}_i^j, c_{j+1})$) is delayed, the traveler will be stuck at $c_{j+1}$.
A direct consequence of Lemma~\ref{lem:delay_budget_0} is that at the adversary will always have at least $1$ delay remaining in the assignment phase.
This prevents the traveler from traversing a literal arc during the assignment phase and using $a$ to ``jump ahead'' to the evaluation phase, since the adversary could delay $a$.
Conversely, in the evaluation phase, the adversary has no budget left and cannot delay $a$.

% \begin{figure}
% 	\centering
% 	\includegraphics[page=8]{hardness_proof.pdf}
% 	\caption{Clause gadget}
% 	\label{fig:clause_gadget}
% \end{figure}

\begin{lemma}%[\appsymb]
  \label{lem:literal_arc_clause_traversal}
  \looseness=-1
  If the traveler arrives at the entry node $c_j$ of $\mathcal{G}^C_j$ at a timestamp $t \le t_{c_j}$ they can travel to the exit node $c_{j+1}$ of $\mathcal{G}^C_j$ if and only if at least one literal arc of $\mathcal{G}^C_j$ has been delayed.
\end{lemma}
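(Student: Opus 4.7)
The plan is to prove the biconditional directly by tracking arrival times through the clause gadget, exploiting the fact (from \cref{lem:delay_budget_0}) that once the traveler reaches $c_1$ the adversary's budget is exhausted, so no further delays can be introduced during the entire evaluation phase. Thus the set $D$ of delayed arcs is static from the moment the traveler arrives at any $c_j$, and I only need to analyze plain temporal reachability in a fixed delayed graph.

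For the ($\Leftarrow$) direction I would exhibit a concrete traversal. Suppose some literal arc of $\mathcal{G}^C_j$, say $e_i^j$, has been delayed. The traveler is at $c_j$ at a time at most $t_{c_j}$, so they can use the arc $c_j \to v_i^j$ (starting time $t_{c_j}$) and arrive at $v_i^j$ at time $t_{c_j}+1$. Because $e_i^j$ is delayed, its effective starting time is $t_{c_j}+\delta = t_{c_j}+1$, so the traveler can traverse it and arrive at $u_i^j$ at time $t_{c_j}+2$. Finally, the gadget arc $u_i^j \to c_{j+1}$ has starting time $t_{c_j}+2$, so the traveler reaches $c_{j+1}$ at time $t_{c_j}+3 = t_{c_{j+1}}$, as required. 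The symmetric argument applies when a negative literal arc $\overline{e}_i^j$ is delayed.

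For the ($\Rightarrow$) direction I would argue the contrapositive: if no literal arc of $\mathcal{G}^C_j$ is delayed, then $c_{j+1}$ is unreachable from $c_j$. The only outgoing arcs from $c_j$ are those to the vertices $v_i^j$ or $\overline{v}_i^j$ for literals in $C_j$, each with starting time $t_{c_j}$; so any strategy first lands the traveler at some such vertex, call it $v_\ell$, at time $t_{c_j}+1$. The sole outgoing arc from $v_\ell$ is its literal arc $e_\ell$ (this is the key structural fact: $v_\ell$ is an internal node of a parallel path in the clause gadget and has no other outgoing arcs in $G$). Since $e_\ell$ is not delayed, its starting time is still $t_{c_j} < t_{c_j}+1$, so the traveler is stranded at $v_\ell$, and $c_{j+1}$ is unreachable.

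The main obstacle I anticipate is justifying the structural claim that $v_\ell$ truly has no other outgoing arc in $G$ (rather than only within the clause gadget). This requires a careful inspection of the variable-gadget construction to confirm that each $v_i^j$ and $\overline{v}_i^j$ has exactly one outgoing arc, namely the literal arc, and no additional shortcuts exist in $G$. Assuming the construction is as described, the verification is a routine case distinction; the arithmetic on arc starting times is immediate from the definition $t_{c_j} = t_{s_{n+1}}+1+3(j-1)$ and $\delta = 1$.
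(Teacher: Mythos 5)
Your proof is correct and follows essentially the same route as the paper's: the same walk $c_j \to v_i^j \to u_i^j \to c_{j+1}$ with the same time arithmetic for the delayed literal arc, and the same "stuck at $v_i^j$" argument (via contrapositive) when no literal arc is delayed, with the adversary's exhausted budget after \cref{lem:delay_budget_0} justifying that the delay set is static in the evaluation phase. One small correction to your flagged structural claim: $v_i^j$ is not a sink apart from its literal arc—it also carries the outgoing arc of the variable-gadget path $P_i$ (or $\overline{P}_i$)—but that arc has a time label at most $t_{s_{n+1}} < t_{c_j}$ and is therefore unavailable at timestamp $t_{c_j}+1$, so the correct (and sufficient) statement is the paper's: no arc leaving $v_i^j$ is reachable in time unless $e_i^j$ has been delayed.
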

\appendixproof{lem:literal_arc_clause_traversal}{
% \begin{proof}
	In order to travel from $c_j$ to $c_{j+1}$, the traveler has to pick one of the parallel paths.
	\Wlogg~suppose there exits a literal arc $e_i^j$ of $\mathcal{G}^C_j$ which has been delayed.
	Then, the traveler chooses the path that contains this arc and arrives at $v_i^j$ at timestamp $t_{c_j} + 1$.
	Since $e_i^j$ has been delayed, instead of departing at timestamp $t_{c_j}$ it leaves at timestamp $t_{c_j}+1$, making the arc traversable for the traveler.
	When the traveler arrives at vertex $u_i^j$, it can move to vertex $c_i^j$ and arrive there at timestamp $t_{c_j}+3 \le t_{c_{j+1}}$.
	Now suppose that no literal arc is delayed.
	Assume the traveler picks an arbitrary path.
	\Wlogg~this path contains the literal arc $e_i^j$.
	When the traveler arrives at vertex $v_i^j$ at timestamp $t_{c_j} + 1$, there is no arc leaving $v_i^j$ that the traveler can reach in time since $e_i^j$ is not delayed and already departs at timestamp~$t_{c_j}$.
	The traveler would thus be stuck at vertex $v_i^j$ and never reach $c_{j+1}$.
% \end{proof}
} % end appendixproof

\begin{figure}[tbp]
	\centering
	\includegraphics[page=10, width=\linewidth]{hardness_proof_new.pdf}
	\caption{Resulting graph $G$ from reducing formula $\phi = \exists x_1 \forall x_2 \exists x_3 ((x_1 \lor \lnot x_2 \lor \lnot x_3) \land (\lnot x_1 \lor x_2 \lor \lnot x_3) \land (x_1 \lor \lnot x_2 \lor x_3))$. On the right you see the two phases extracted from $G$.}
	\label{fig:reduction_example}
\end{figure}

With the construction of $G$ and therefore the whole instance $I$ concluded, we can move on to the main statement of the reduction.

\subsection{Proving \PSPACE-completeness}

Here we establish the connection and in fact the equivalence of a the graph traversal in the \gamenamelong\ and the variable assignments in the \probname{QBF Game}.

\begin{lemma}%[\appsymb]
  \label[lemma]{lem:qbf-equal-robust-connection}
  \looseness=-1
  There is a winning strategy for \probname{QBF Game} for $\phi$ if and only if there is a winning strategy for the traveler in the reduction instance of \gamenamelong{}.
\end{lemma}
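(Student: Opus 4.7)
The plan is to prove both directions of the equivalence by establishing a direct correspondence between moves in the QBF Game and moves in the robust connection game on the constructed instance $I$. The two phases of $G$ (assignment and evaluation) correspond respectively to the quantifier round and the final clause-evaluation step of the QBF Game.

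\textbf{Forward direction ($\exists$-player wins $\Rightarrow$ traveler wins).} Assume the $\exists$-player has a winning strategy $\sigma$ for $\phi$. I would construct a strategy for the traveler as follows. During the assignment phase, whenever the traveler reaches $s_i$ for an existentially quantified $x_i$, consult $\sigma$: if $\sigma$ sets $x_i$ to true, enter path $P_i$; otherwise enter $\overline{P}_i$. By \cref{lem:existential-traversal} either the traveler escapes to $z$ (and wins) or all $m$ arcs $e_i^j$ (respectively $\overline{e}_i^j$) are delayed, which we interpret as $x_i$ being assigned true (respectively false). For a universally quantified $x_i$, by \cref{lem:universal-traversal} the adversary's play produces one of Situations (a), (b), or (c); by \cref{lem:delay_budget_0} Situation (b) is a losing move for the adversary, so it is never played in an optimal game, and Situations (a) and (c) correspond to the $\forall$-player setting $x_i$ to true and false, respectively. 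Thus, by the time the traveler reaches $s_{n+1}$, the delayed positive/negative arcs in the variable gadgets encode an assignment consistent with a play of the QBF Game against $\sigma$, and $\sigma$ guarantees that this assignment satisfies $\varphi$. By \cref{lem:delay_budget_0} the adversary has exhausted its entire budget by the time the traveler reaches $c_1$, hence no literal arc can be delayed in the evaluation phase beyond what was already delayed during assignment. Since each clause $C_j$ contains at least one true literal, the corresponding literal arc in $\mathcal{G}^C_j$ has been delayed, so by \cref{lem:literal_arc_clause_traversal} the traveler can traverse every clause gadget in turn and finally move from $c_{m+1}$ to $z$.

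\textbf{Reverse direction (traveler wins $\Rightarrow$ $\exists$-player wins).} Assume the traveler has a winning strategy $\tau$. I would describe an $\exists$-player strategy by simulating $\tau$ against the following adversary: in each existential gadget, delay exactly the $m$ positive or $m$ negative arcs according to the path the traveler takes; in each universal gadget, play Situation (a) or (c) depending on the $\forall$-player's choice (translating true/false to the adversary's move using \cref{lem:universal-traversal}); and spend the final unit of budget on delaying $(s_{n+1},v)$. With this adversary, \cref{lem:delay_budget_0} tells us no escape to $z$ happens during the assignment phase, and the sequence of gadgets produces a well-defined variable assignment $\alpha$. Since $\tau$ is winning, the traveler must successfully traverse the entire evaluation phase; by \cref{lem:literal_arc_clause_traversal} each clause $C_j$ has a delayed literal arc, and by construction a literal arc is delayed precisely when its literal is true under $\alpha$. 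Hence $\alpha \models \varphi$. Reading off the traveler's choices in existential gadgets as the $\exists$-player's moves yields a QBF strategy which, against every $\forall$-player play, produces a satisfying assignment.

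\textbf{Main obstacle.} The subtle point is ruling out ``off-schedule'' behavior by the two players. On the adversary's side, I need to argue that any deviation from the canonical gadget behavior (delaying arcs that are not all-positive or all-negative, or delaying the first arc of $P_i$ in an existential gadget, etc.) either wastes budget or lets the traveler escape to $z$ via \cref{obs:edge_delays} and the escape arcs, hence can be assumed away without loss of generality. On the traveler's side, I need to show that the clause-gadget ``jump-ahead'' trick (using a literal arc already in the assignment phase to hop to some $c_{j+1}$) is blocked precisely because, before reaching $c_1$, the adversary retains at least one unit of budget (by \cref{lem:delay_budget_0}) and can delay the jump arc $(u_i^j, c_{j+1})$ or $(\overline{u}_i^j, c_{j+1})$, stranding the traveler. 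These invariance arguments are the technical core of the proof and need to be carried out carefully before the two simulation arguments above can be made rigorous.
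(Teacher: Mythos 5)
Your proposal follows essentially the same route as the paper's proof: encode truth values by the choice of $P_i$ versus $\overline{P}_i$, use \cref{lem:existential-traversal} and \cref{lem:universal-traversal} to control which arcs get delayed, invoke \cref{lem:delay_budget_0} to exhaust the adversary's budget before $c_1$, and finish with \cref{lem:literal_arc_clause_traversal} clause by clause. The quantifier structure in your reverse direction (simulating the winning traveler strategy against an adversary parameterized by the $\forall$-player's choices) matches the paper's argument, and the ``main obstacle'' you flag is indeed the part both proofs leave somewhat informal.

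One concrete error: you write that ``Situation (b) is a losing move for the \emph{adversary}, so it is never played in an optimal game.'' This gets the attribution backwards. In \cref{lem:universal-traversal}, Situation (b) arises when the adversary does \emph{not} delay the first arc of $P_i$ and the \emph{traveler} nevertheless chooses $\overline{P}_i$; only $m$ arcs are then delayed in that gadget, which by \cref{lem:delay_budget_0} leaves the adversary with surplus budget to strand the traveler at $s_{n+1}$. So Situation (b) is a blunder by the traveler that the adversary would welcome, and it is the traveler's strategy that must rule it out. As written, your forward direction never specifies the traveler's move in universal gadgets, which is exactly the point where this matters: you must state explicitly that the traveler takes $P_i$ whenever its first arc is undelayed and falls back to $\overline{P}_i$ only when the adversary has delayed that arc (yielding Situations (a) and (c) respectively, as in the paper). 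With that one-sentence fix the argument goes through as in the paper.
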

\appendixproof{lem:qbf-equal-robust-connection}{
% \begin{proof}
	\begin{description}
		\item[$\Rightarrow$:]
		Assume the $\exists$-player has a winning strategy for the \probname{QBF Game}.
		In particular this means that there is a decision tree for the $\exists$-player to choose a value for every existentially quantified variable, given the truth values of all previously quantified variables, \st, the resulting variable assignment satisfies $\varphi$.
		The corresponding winning strategy for the traveler in the \gamenamelong{} follows every decision in this decision tree at every $\mathcal{G}^{\exists}_i$ gadget.
		If the decision is to set the variable $x_i$ to true, the traveler chooses the $P_i$ path, otherwise they choose the $\overline{P}_i$ path.
		At every $\mathcal{G}^{\forall}_i$ gadget, if the adversary delays the first arc of the $P_i$ path, the traveler chooses the $\overline{P}_i$ path, otherwise they chooses the $P_i$ path.
		Since the variable assignment in the \probname{QBF Game} satisfies every clause in $\varphi$, at least one literal of the clause is true and therefore at least one literal arc of every clause gadget is delayed.
		By Lemma~\ref{lem:literal_arc_clause_traversal} the traveler can traverse every clause gadget and reach $z$.

                \looseness=-1
		\item[$\Leftarrow$:]
		Now assume the traveler has a winning strategy for the \gamenamelong.
		In particular this means, the traveler always chose the ``correct'' path in every $\mathcal{G}^{\forall}_i$ gadget, depending on the choice of the adversary to delay the first arc of $P_i$ or not.
		Moreover, for every such choice the traveler could choose one of the two available paths $P_i$ or $\overline{P}_i$ in every $\mathcal{G}^{\exists}_i$ gadget, \st, they could traverse every clause gadget to reach $z$.
		Since by Lemmas~\ref{lem:existential-traversal} and~\ref{lem:universal-traversal} only positive or negative arcs, but never a mixture of both can be delayed in every variable gadget, we can follow all decisions of the traveler and set the variable $x_i$ to true if the traveler would choose path $P_i$ and to false otherwise.
		Finally, since the traveler could traverse every clause gadget and by Lemma~\ref{lem:literal_arc_clause_traversal}, there is at least one delayed literal arc in every clause gadget. Therefore, the resulting variable assignment satisfies $\varphi$ and the $\exists$-player has a winning strategy. \qedhere
	\end{description}
% \end{proof}
} % end appendixproof
Since deciding if there is a winning strategy in the \probname{QBF Game} is \PSPACE-hard, we conclude the following theorem from \cref{lem:qbf-equal-robust-connection} and \cref{thm:dp-algorithm}.

%\begin{theorem}
%	\label[theorem]{thm:hardness}
%	\gamename{} is \PSPACE-hard.
%\end{theorem}   
%
%Together with \cref{thm:dp-algorithm}, we can conclude completeness.

\begin{theorem}
	\label[theorem]{thm:completeness}
	\gamenamelong{} is \PSPACE-complete.
\end{theorem}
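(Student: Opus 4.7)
The plan is to combine the two main results already proved in the paper: containment in \PSPACE\ from \cref{thm:dp-algorithm} and \PSPACE-hardness established via the reduction from \probname{QBF Game} to \gamename{}. Since \probname{QBF Game} is known to be \PSPACE-hard (as recalled at the start of \cref{sec:PSPACE-hardness} via the cited survey), the hardness direction reduces to verifying that the construction of $I=(G,s,z,x,\delta)$ from a QBF $\phi=Q_1x_1\dots Q_nx_n\varphi$ can be carried out in polynomial time, and then invoking \cref{lem:qbf-equal-robust-connection}.

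First I would note that \cref{thm:dp-algorithm} already gives \gamename{} $\in$ \PSPACE, so only hardness remains. For hardness, I would briefly audit the size of the reduction: each variable gadget $\mathcal{G}^{\exists}_i$ or $\mathcal{G}^{\forall}_i$ introduces $\bigO(m)$ vertices and $\bigO(m)$ arcs (two parallel paths of length $\bigO(m)$ with escape arcs, plus the additional vertex pair in the universal case); each clause gadget $\mathcal{G}^C_j$ introduces $\bigO(n)$ arcs (one per literal in $C_j$, plus the two entry/exit arcs). Hence $G$ has $\bigO(nm)$ vertices and arcs. The time labels $t_{s_i}=(i-1)\cdot 2(m+2)$ and $t_{c_j}=t_{s_{n+1}}+1+3(j-1)$, the budget $x=n(m+|\forall|)+1$, and $\delta=1$ are all integers bounded polynomially in $n+m$ and thus of polynomial bit-length. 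Consequently the instance $I$ can be written down in time polynomial in $|\phi|$.

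Next I would apply \cref{lem:qbf-equal-robust-connection}, which asserts that the $\exists$-player wins the \probname{QBF Game} on $\phi$ if and only if the traveler has a winning strategy in $I$. Therefore the described polynomial-time mapping is a valid many-one reduction from \probname{QBF Game} to \gamename{}, and hardness follows from \PSPACE-hardness of \probname{QBF Game}. Combined with the containment from \cref{thm:dp-algorithm}, this gives \PSPACE-completeness.

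The only non-routine point is the size check on the reduction, which is essentially bookkeeping — no new obstacle arises because all conceptual work (correctness of the gadgets, the forced behaviors of both players, and the equivalence of game states to variable assignments) has already been discharged by \cref{lem:existential-traversal,lem:universal-traversal,lem:delay_budget_0,lem:literal_arc_clause_traversal,lem:qbf-equal-robust-connection}. Thus the proof is just a short wrap-up: invoke containment, verify polynomial time, invoke the equivalence lemma.
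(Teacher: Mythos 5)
Your proposal matches the paper's own argument: the theorem is concluded directly from \cref{thm:dp-algorithm} (containment) together with \cref{lem:qbf-equal-robust-connection} and the \PSPACE-hardness of \probname{QBF Game}. Your additional explicit check that the reduction is polynomial-size is sound bookkeeping that the paper leaves implicit, so the proof is correct and takes essentially the same route.
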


\section{Conclusion}
\label{sec:conclusion}

\looseness=-1
We close with directions for future research:
First, the running time for \gamenamelong\
%that we obtain in \cref{thm:dp-algorithm} 
has the form $n \cdot m^{\bigO(x)}$, where $n$, $m$, and $x$ are the number of vertices, temporal arcs, and delayed arcs, respectively.
% where $n$ is the number of vertices, $m$ the number of temporal arcs, and $x$ the number of delayed arcs.
It is natural to ask whether we can remove the dependence of the exponent on~$x$ to obtain a running time of the form $f(x) \cdot (n \cdot m)^{\bigO(1)}$.
%It would be interesting to know whether it is possible to remove the dependency of the exponent on~$x$.
%That is, is there an algorithm with running time $f(x) \cdot (n \cdot m)^{O(1)}$?
Furthermore, what if we ask for routes with only a small number $c$ of changeovers?
Is there a running time of $(n \cdot m)^{f(c)}$ possible?
% is it possible to decide whether they have a winning strategy in $(n \cdot m)^{f(c)}$ time or even $f(c) \cdot (n \cdot m)^{O(1)}$ time?
% It would also be interesting to restrict our focus to routes with few changeovers.
% That is, if we require the traveler to reach his destination with a small number $c$ of changeovers, is it possible to decide whether they have a winning strategy in $(n \cdot m)^{f(c)}$ time or even $f(c) \cdot (n \cdot m)^{O(1)}$ time?
\ifarxiv
Additionally, one could investigate a 'static' variant of \gamename:
The variant in which we ask whether there is a route that is feasible for every possible bounded-size set of delays has been studied by \citet{FuchsleMNR22STACS}.
However, the variant where we ask whether for every bounded-size set of delays there exists a feasible route is yet unstudied for our model.
\fi
Finally, it would be interesting to see how we can incorporate delays that vary between different routes or change over time into our model and whether this impacts our results.
% \sebastian[]{I think we could omit all the written out journal names and just use abbreviations, should save us multiple lines}
% \soeren{Correct abbreviations definitely (use DBLP for that)}
% \thomas[]{Since the references count towards the page count we could consider (as a last option) to remove the editor information}
% \soeren{Also agreed, editors are not necessary}

%\ifarxiv{}\else
\textbf{Acknowledgements.}
The authors would like to thank Manuel Sorge for his input on the hardness proof, the fruitful discussion, and the feedback on the first version of this paper.
%\fi

\printbibliography

%\clearpage
\appendix
%\appendixProofText

\end{document}

%%% Local Variables:
%%% mode: latex
%%% TeX-master: t
%%% End: